


\documentclass[11pt, twoside]{article}
\usepackage{amsmath}
\usepackage{amssymb}			
\usepackage{amsthm}
\usepackage{amsfonts}
\usepackage{mathtools}
\usepackage{empheq}
\usepackage[mathscr]{eucal}
\usepackage{tensor}
\usepackage{tikz}
\usetikzlibrary{tikzmark,fit}							

\usepackage{lmodern} 											
\usepackage[T1]{fontenc} 											
\linespread{1.4} 												
\setlength{\parskip}{3pt}											
\usepackage{microtype} 											
\usepackage[lmargin=20mm, hmarginratio=1:1, top=25mm, bottom=25mm, columnsep=20pt]{geometry} 		
\usepackage{multicol} 											

\usepackage{abstract} 									
\newtheoremstyle{def}{1.5\topsep}{7pt\topsep}{\upshape}{}{\bfseries}{.}{5pt}	{\thmname{#1}\thmnumber{ #2}}

\newtheoremstyle{obs}{1.5\topsep}{7pt\topsep}{\upshape}{}{\bfseries}{.}{5pt}	{\thmname{#1}\thmnumber{ #2}}

\newtheoremstyle{prp}{1.5\topsep}{7pt\topsep}{\upshape}{}{\bfseries}{.}{5pt}	{\thmname{#1}\thmnumber{ #2}}	

\newtheoremstyle{th}{1.5\topsep}{7pt\topsep}{\upshape}{}{\bfseries}{.}{5pt}	{\thmname{#1}\thmnumber{ #2}}
	
\newtheoremstyle{lmm}{1.5\topsep}{7pt\topsep}{\upshape}{}{\bfseries}{.}{5pt}	{\thmname{#1}\thmnumber{ #2}}

\theoremstyle{th}
\newtheorem{theorem}{Theorem}[section]	

\theoremstyle{prp}
\newtheorem{proposition}{Proposition}[section]

\theoremstyle{def}
\newtheorem{definition}{Definition}[section]

\theoremstyle{obs}

\theoremstyle{lmm}
\newtheorem{lemma}{Lemma}[section]


\usepackage{fancyhdr} 			

\usepackage[titletoc]{appendix}			


\usepackage{cite}
\usepackage{paralist} 										
\usepackage[hang, small,labelfont=bf,up,textfont=it,up]{caption} 		
\usepackage{booktabs} 										
\usepackage{float} 					
\usepackage{cases}
\usepackage{multicol}

\usepackage{slashed}

\usepackage{hyperref} 				




\fancypagestyle{plain}{%
	\fancyhead[R]{IFT-UAM/CSIC-19-97}
	\renewcommand{\headrulewidth}{0pt}
}

\title{\vspace{15mm}\fontsize{14pt}{15pt}\selectfont\textbf{$p$-brane Newton--Cartan Geometry}\vspace{5mm}} 

\author{
	\large
	\textsc{David Pere\~niguez}\footnote{david.perenniguez@uam.es}\\[12mm] 			
	\textsc{\textit{Instituto de Física Teórica UAM/CSIC}}\\
	\textsc{ \textit{C/ Nicolás Cabrera, 13-15, C.U. Cantoblanco, E-28049 Madrid, Spain}}
}

\begin{document}				

\begin{titlepage} 			
\centering

\maketitle 						
\fancyfoot{}


\begin{abstract}
\noindent We provide a formal definition of $p$-brane Newton--Cartan ($p$NC) geometry and establish some foundational results. Our approach is the same followed in the literature for foundations of Newton--Cartan Gravity. Our results provide control of aspects of $p$NC geometry that are otherwise unclear when using the usual gauge language of non-relativistic theories of gravity. In particular, we obtain a set of necessary and sufficient conditions that a $p$NC structure must satisfy in order to admit torsion-free, compatible affine connections, and determine the space formed by the latter. This is summarised in Theorem \ref{th:1}. Since $p$NC structures interpolate between Leibnizian structures for $p=0$ and Lorentzian structures for $p=d-1$ (with $d$ the dimension of the spacetime manifold), the present work also constitutes a generalisation of results of Newton--Cartan and (pseudo-) Riemannian geometry.
\end{abstract}

\vfill


\end{titlepage}


\pagestyle{empty}
\newpage
\tableofcontents

\newpage


\setcounter{page}{1}
\pagestyle{fancy}
\fancyhead{}
\fancyfoot{}
\fancyfoot[C]{-- ~\thepage~ --} 		

\renewcommand{\headrulewidth}{0pt}

\section{Introduction}
\subsection{Motivation}

From the celebrated work of É. Cartan in \cite{E.Cartan}, we learnt that the geometric description of gravity is something by no means unique of General Relativity. In particular, he showed that Newton's theory of gravity can also be reformulated in purely geometrical terms, giving raise to the so-called Newton--Cartan Gravity. The geometry required to do so, however, is not (pseudo-) Riemannian, and much work has been done to construct the foundations of such theory (see, e.g.\,\cite{XBKM,Malament}). More recently, it has been discovered that there are more non-relativistic theories of gravity beyond Newton's. For instance, it is possible to define a type of non-Riemannian geometry, called Stringy Newton--Cartan geometry (SNC), that describes the gravitational field that couples to non-relativistic strings. More precisely, the non-relativistic closed string theory described in \cite{NRCS} can be consistently coupled to a SNC background geometry. This was first noticed in \cite{SNC}, and recent work developing applications of such result can be found in \cite{NRSTTD,Tachyons}. Newton--Cartan Gravity has also been extended by considering the addition of torsional connections. Such geometries, dubbed Torsional Newton--Cartan (TNC), posses interesting relations with holography \cite{ObersHolography1,ObersHolography2}, Horava--Lifshitz Gravity \cite{HL} and non-relativistic string theory \cite{ObersST1,ObersST2,ObersST3}.

SNC geometry was first introduced using the modern language in which non-relativistic theories are constructed. Essentially, such formalism consists in gauging a given non-relativistic Lie algebra and, by imposing a set of \textit{conventional} curvature constraints, the field content is reduced. This results in a geometry that encodes the degrees of freedom of the theory (see \cite{Bargmann} for an illustrative example). Alternative methods for deriving such non-relativistic geometries have also been developed \cite{GalileiFieldTheory}. The construction of invariant actions for such theories is a difficult task, although some success has been achieved for particular geometries such as (type II) TNC gravity \cite{ObersAction1,ObersAction2}. More systematic procedures towards the construction of actions have been studied in \cite{LucaTomas}. This approach to non-relativistic gravity has proven to be very powerful and the resulting theories have potentially interesting applications in holography and condensed matter physics \cite{Hoyos,Son,ObersHolography1,ObersHolography2}. However, as we will discuss in the next section, such formalism, although physically useful, obscures some aspects of the resulting geometries. These are better understood when reformulated in a framework similar to that used in the literature for foundations of Newton--Cartan Gravity (e.g.\,\cite{XBKM,Malament}). Such formalism is briefly reviewed in the next section.

In this paper we consider the extension of SNC to the case of $p$-branes and provide precise control of some aspects of the geometry that have not yet been investigated in full detail.
\subsection{Non-Relativistic Geometries}

Following É.Cartan's work, the geometric description of gravity is based on two structures:
\begin{enumerate}
\item First, a tensor structure must be prescribed on the spacetime manifold $M$ in order to realise a given structure group. For example, the Lorentz group $O(1,d-1)$ (where $d=\text{dim}M$) can be realised on $M$ by prescribing a metric tensor of Lorentzian signature, $g_{ab}$. Indeed, the bundle of frames on $M$ that are orthonormal with respect to $g_{ab}$ form a reduction of the frame bundle based on $O(1,d-1)$.

\item Second, a notion of spacetime curvature is provided by prescribing a connection on the tangent bundle $TM$. In addition, such connection is required to be compatible with the tensor structure that realises the structure group (some alternative compatibility conditions for the case of $p$-brane Newton--Cartan geometry are discussed in \cite{TomasDavidLuca1}). Considering the previous example where the tensor structure consists of a Lorentzian metric, the Fundamental Theorem of Riemannian Geometry guarantees the existence and uniqueness of a torsion-free connection on $TM$ which is compatible with $g_{ab}$. This is, of course, the Levi--Civita connection of $g_{ab}$.
\end{enumerate}
Non-relativistic geometries can be constructed by first choosing a structure group consisting of some 'non-relativistic limit' of $O(1,d-1)$. A canonical example of non-relativistic group is the homogeneous Galilei group $\text{Gal}(d)$. The tensor structure that realises it is a pair $(\tau_{a},h^{ab})$ where $\tau_{a}$ is a no-where vanishing 1-form (the \textit{absolute clock}), $h^{ab}$ is a symmetric tensor of Riemannian signature and rank $d-1$ (the \textit{absolute rulers}), and both are mutually orthogonal $\tau_{a}h^{ab}=0$. The pair $(\tau_{a},h^{ab})$ is said to form a \textit{Leibnizian structure on $M$} \cite{XBKM,BernalSanchez}. In general, the tensor structures that realise non-relativistic groups do \textit{not} contain a (pseudo-) Riemannian metric. Thus, there is no analogue of the Levi--Civita connection. More precisely, a connection that is torsion-free and compatible with the tensor structure might be non-unique, or might not exist. In the mathematical literature, this is sometimes referred to as the \textit{equivalence problem}. Solving it consists in determining which additional structure must be prescribed on the manifold, e.g.\ torsion, in order to fix a connection uniquely.

The equivalence problem is well understood for Leibnizian structures. First, it has been determined which subclass of structures admit torsion-free, compatible connections. These are defined by the property of closedness of the absolute clock, $\text{d}\tau=0$, and are dubbed \textit{Augustinian structures}. Then, for such structures it can be shown that, given a \textit{field of observers} $N$ (a vector field satisfying $\tau(N)=1$ everywhere), a torsion-free connection compatible with $(\tau_{a},h^{ab})$ can be uniquely determined. It is referred to as the \textit{torsion-free special connection associated to N} \cite{Torfreespecial1,Torfreespecial2}. There is one of those for each $N$, and they can be thought of as the analogues of the Levi--Civita connection.

However, the equivalence problem associated to SNC structures (and their generalisation to $p$-branes) has not been studied in detail. Thus, some aspects of these geometries remain unclear. For instance, it has not been determined whether the set of conventional curvature constraints in \cite{SNC} are necessary for ensuring existence of torsion-free connections that are compatible with the SNC metrics. It is worth having precise control of this, as one might be interested in relaxing the curvature constraints of the theory without losing the notion of torsion-free, compatible connections. Similarly, the space of all such connections has not been determined precisely. We will see that some symmetries of the compatible connections were originally missed and, hence, the space in which the latter live is actually smaller than initially suggested. Exact determination of such space is convenient since, in general, the fields parametrising the connections propagate degrees of freedom (as it is the case in the Newton--Cartan theory \cite{Malament}, for instance). 

In the present paper we study the equivalence problem associated to \textit{p-brane Newton--Cartan} structures ($p$NC). To this aim, the first part of the work is devoted to the reformulation of $p$NC geometry in a framework that generalises the one used to establish the foundations of Newton--Cartan Gravity \cite{XBKM,Malament}. This language, closer to the initial idea of É.Cartan presented above, is more convenient for the study of the equivalence problem. The second part of the paper is dedicated to providing a set of necessary and sufficient conditions that a $p$NC structure must satisfy in order to admit torsion-free, compatible affine connections, and to determine the space formed by the latter. This is summarised in Theorem \ref{th:1}. Our paper fixes some aspects of $p$NC geometry that have not been studied in detail before and, thus, complements previous work in \cite{SNC}. In addition, since $p$NC structures interpolate between Leibnizian structures for $p=0$ and Lorentzian structures for $p=d-1$, our results constitute a generalisation of those known for Newton--Cartan and (pseudo-) Riemannian geometry.

\paragraph{Outline.} In Section 2, we motivate a choice of non-relativistic structure group, dubbed $G_{p}$, by studying a non-relativistic limit of the worldvolume action of a $p$-brane propagating in a Minkowski background in $d$ dimensions. Then, we derive a tensor structure that realises $G_{p}$ on the spacetime manifold. This gives raise to the notion of $p$-brane Newton--Cartan structures. In Section 3, we first classify $p$NC structures in Aristotelian and Augustinian (analogously to the classification of Leibnizian structures), the latter being the only ones admitting compatible connections with vanishing torsion. Focusing on the Augustinian case, we solve the equivalence problem and determine the corresponding space of torsion-free, compatible connections. Finally, in Section 4, we summarise our results and compare them with previous work in the literature.

\subsection{Conventions}
Lower-case Latin characters in the beginning of the alphabet, $a,b,c,...$, are used as abstract indices while Greek symbols $\alpha,\beta,\gamma...$ are reserved for labelling the components of tensors in (either general or particular) coordinate charts. Capital Latin characters $A,B,C,...$ are used as labels that run from $0$ to $p$, while lower case Latin characters in the middle of the alphabet $i,j,k,...$ are labels running from $1$ to $d-p-1$.

The symbol $\eta_{AB}$ is reserved for the components of the Lorentz metric in $p+1$ dimensions, $\text{diag}(-1,1,...,1)$. $\eta^{AB}$ denotes its inverse. In order to avoid confusion, operations of raising or lowering Latin indices are \textit{not} considered throughout the paper. 

The symmetrisation and anti-symmetrisation operations are defined as
\begin{equation}\label{eq:symasym}
A_{(ab)}:=\frac{1}{2!}(A_{ab}+A_{ba}),\,\,\,\,\,\,\,\,B_{[ab]}:=\frac{1}{2!}(B_{ab}-B_{ba}),
\end{equation}
and are generalised to tensors of arbitrary rank in the obvious way. If $p_{a_{1}...a_{p}},q_{b_{1}...b_{q}}$ are a $p$-form and a $q$-form and $P_{a_{1}...a_{p}},Q_{b_{1}...b_{q}}$ are totally symmetric tensors, then the $\wedge$-product and $\vee$-product are defined as
\begin{equation}\label{wedgevee}
\left(p\wedge q\right)_{a_{1}...a_{p}b_{1}...b_{q}}:=\frac{(p+q)!}{p!q!}p_{[a_{1}...a_{p}}q_{b_{1}...b_{q}]},
\,\,\,\,\,\,
\left(P\vee Q\right)_{a_{1}...a_{p}b_{1}...b_{q}}:=\frac{(p+q)!}{p!q!}P_{(a_{1}...a_{p}}Q_{b_{1}...b_{q})}.
\end{equation}

Finally, if $\nabla$ is an affine connection on the tangent bundle $TM$ and $\{\partial_{\mu}\}$ the frame associated to a general coordinate system, the connection components ${\Gamma^{\lambda}}_{\mu\nu}$ of $\nabla$ are defined, in such chart, as
\begin{equation}\label{concomp}
{\Gamma^{\lambda}}_{\mu\nu}\partial_{\lambda}:=\nabla_{\partial_{\nu}}\partial_{\mu},
\end{equation}
and the torsion tensor $T$ of $\nabla$ is given by
\begin{equation}\label{torsion}
T(X,Y):=\nabla_{X}Y-\nabla_{Y}X-[X,Y],
\end{equation}
where $X,Y$ are tangent vector fields.
\section{$p$-brane Newton--Cartan Structures}

This section is divided in two parts. First, we motivate our choice of structure group, dubbed here $G_{p}$. Then, working on a $d$-dimensional vector space, we construct a tensor structure which defines a class of frames that form a $G_{p}$-torsor\footnote{We recall that a $G$-torsor of a group $G$ is a set on which $G$ acts regularly (i.e.\,freely and transitively).}. These are the analogues of the orthonormal frames in Lorentzian structures. In the second part, we extend the tensor structure to the spacetime manifold, leading to the notion of $p$-brane Newton--Cartan structure. The corresponding principal $G_{p}$-bundle of frames is defined. We also construct other principal bundles, based on subgroups of $G_{p}$, which become crucial in the study of $p$NC connections. 
\subsection{The Group of Symmetries of Non-Relativistic $p$-branes}
Our starting point is the reparametrisation-invariant worldvolume action for a relativistic $p$-brane embedded in Minkowski space in $d$ dimensions,
\begin{equation}\label{eq:21}
S[X^{\mu}]=-T_{p}\int d^{p+1}\sigma \sqrt{-\gamma},
\end{equation}
where $\gamma$ is the pull-back of the Minkowski metric on the worldvolume. We will use $\sigma^{\bar{A}}$ for generic worldvolume coordinates, so that this pull-back is given by
\begin{equation}\label{eq:22}
\gamma_{\bar{A}\bar{B}}=\eta_{\mu\nu}\partial_{\bar{A}}X^{\mu}\partial_{\bar{B}}X^{\nu}.
\end{equation}
For this discussion it is convenient to introduce a particular choice of worldvolume coordinates corresponding to the first $p+1$ target space coordinates $x^{A}$, so that $X^{A}=x^{A}$ and $X^{i}=X^{i}(x^{A})$. Now, following \cite{NonRellim1}, we rescale the longitudinal spacetime coordinates as
\begin{equation}\label{eq:23}
x^{A}\longrightarrow cx^{A}
\end{equation}
with $c\gg 1$. This has the effect of focusing on a small region of the brane. Now $\gamma_{\bar{A}\bar{B}}$ reads
\begin{equation}\label{eq:24}
\gamma_{\bar{A}\bar{B}}=c^{2}\bar{\gamma}_{\bar{A}\bar{B}}+\delta_{ij}\partial_{\bar{A}}X^{i}\partial_{\bar{B}}X^{j}
\end{equation}
where we have introduced the auxiliary worldvolume metric 
\begin{equation}\label{eq:25}
\bar{\gamma}_{\bar{A}\bar{B}}=\eta_{AB}\partial_{\bar{A}}X^{A}\partial_{\bar{B}}X^{B}.
\end{equation}
The inverse of $\bar{\gamma}_{\bar{A}\bar{B}}$ is
\begin{equation}\label{eq:26}
\bar{\gamma}^{\bar{A}\bar{B}}=\eta^{AB}\partial_{A}\sigma^{\bar{A}}\partial_{B}\sigma^{\bar{B}},
\end{equation}
and expanding the determinant of $\gamma$ gives
\begin{equation}\label{eq:27}
S[X^{\mu}]=-T c^{2}\int d^{p+1}\sigma \sqrt{-\bar{\gamma}}\left(1+\frac{1}{2c^{2}}\bar{\gamma}^{\bar{A}\bar{B}}\partial_{\bar{A}}X^{i}\partial_{\bar{B}}X^{j}\delta_{ij}\right)+...
\end{equation}
where we have rescaled the tension as
\begin{equation}\label{eq:28}
T=T_{p}c^{p-1}
\end{equation}
and the ellipsis corresponds to all terms that vanish for $c\to\infty$. The divergent term $c^{2}\int d^{p+1}\sigma\sqrt{-\bar{\gamma}}$ does not affect the dynamics. Indeed, choosing $x^{A}$ as the worldvolume coordinates, this term is $-Tc^{2}\int d^{p+1}x$. Alternatively, this divergent term can be canceled by, for instance, coupling the brane to non-dynamical background fields as described in \cite{SNC,NRCS}. From any perspective, we shall not consider this term here as we are mainly interested in the dynamics of the non-relativistic $p$-brane. The exact limit $c\to\infty$ results in the action
\begin{equation}\label{eq:29}
S[X^{\mu}]=-\frac{T}{2}\int d^{p+1}\sigma \sqrt{-\bar{\gamma}}\left(\bar{\gamma}^{\bar{A}\bar{B}}\partial_{\bar{A}}X^{i}\partial_{\bar{B}}X^{j}\delta_{ij}\right).
\end{equation}
This action is invariant under the spacetime coordinate transformations
\begin{equation}\label{eq:30}
(x^{A},x^{i})\mapsto({\Lambda^{A}}_{B}x^{B}+\xi^{A},{R^{i}}_{j}x^{j}+{v^{i}}_{B}x^{B}+\chi^{i})
\end{equation}
where $R\in O\left(d-p-1\right)$, $\Lambda\in O(1,p)$, ${v^{i}}_{A}=({v^{i}}_{0},...,{v^{i}}_{p})\in \mathbb{R}^{(p+1)(d-p-1)}$, $\xi^{A}\in \mathbb{R}^{p+1}$ and $\chi^{i}\in \mathbb{R}^{d-p-1}$, all of them being constant, finite parameters. 

In order to define the structure group of $p$NC geometry, we consider flat space containing a non-relativistic $p$-brane described by \eqref{eq:29}. The symmetry group of this background is not $ISO(d)$ anymore but the set of transformations in \eqref{eq:30}. In particular, we will need the group of transformations that the spacetime coordinate transformations \eqref{eq:30} induce in the space of frames. We call such a group $G_{p}$. The frames of two coordinate systems connected by a symmetry transformation \eqref{eq:30} are related by
\begin{equation}\label{eq:31}
\left(\partial_{A}',\partial_{j}'\right)=\left(\partial_{B},\partial_{i}\right){\left(\begin{array}{@{}c|c@{}}{\Lambda^{B}}_{A} & 0  \\\hline{v^{i}}_{A} & {R^{i}}_{j}  \\ \end{array}\right)}.
\end{equation}
Hence, $G_{p}$ can be defined as the set of matrices
\begin{equation}\label{eq:32}
G_{p}:=\left\{{\left(\begin{array}{@{}c|c@{}}{\Lambda^{A}}_{B} & 0  \\\hline{v^{i}}_{B} & {R^{i}}_{j}  \\ \end{array}\right)} \,\,\,\,\,\text{with}\,\,\,\,\,  \Lambda\in O(1,p),\,\,\,\,\, R\in O\left(d-p-1\right),\,\,\,\,\, {v^{i}}_{A}\in \mathbb{R}^{(p+1)(d-p-1)} \right\},
\end{equation}
and we notice that the inverse of a generic element $({v^{i}}_{B},{\Lambda^{A}}_{B},{R^{i}}_{j})\in G_{p}$ is given by
\begin{equation}\label{eq:33}
{\left(\begin{array}{@{}c|c@{}}{\left(\Lambda^{-1}\right)^{B}}_{C} & 0  \\\hline -{\left(R^{-1}\right)^{j}}_{l}{v^{l}}_{D}{\left(\Lambda^{-1}\right)^{D}}_{C} & {\left(R^{-1}\right)^{j}}_{k}  \\ \end{array}\right)}.
\end{equation}

$G_{p}$ is the semi-direct product of two smaller groups of matrices. First, we notice it has a normal subgroup, that will be referred to as the longitudinal group $LG_{p}$, given by
\begin{equation}\label{eq:34}
LG_{p}:=\left\{{\left(\begin{array}{@{}c|c@{}}{\Lambda^{A}}_{B} & 0  \\\hline{v^{i}}_{B} & {\delta^{i}}_{j}  \\ \end{array}\right)} \,\,\,\,\,\text{with}\,\,\,\,\,  \Lambda\in O(1,p),\,\,\,\,\, {v^{i}}_{A}\in \mathbb{R}^{(p+1)(d-p-1)} \right\}.
\end{equation}
This allows us to write $G_{p}$ as the semi-direct product\footnote{In order to alleviate the notation we will make no notational difference between inner or outer semi-direct products, neither will we write explicitly the corresponding group homeomorphisms, as all these should be clear by the context.}
\begin{equation}\label{eq:35}
G_{p}=LG_{p}\rtimes O(d-p-1).
\end{equation}
In its turn, $LG_{p}$ has a normal subgroup consisting of the set of matrices of the form 
\begin{equation}\label{eq:36}
\left(\begin{array}{@{}c|c@{}}{\delta^{A}}_{B} & 0  \\\hline{v^{i}}_{B} & {\delta^{i}}_{j}  \\ \end{array}\right),
\end{equation}
so it can be decomposed as
\begin{equation}\label{eq:37}
LG_{p}=\mathbb{R}^{(p+1)(d-p-1)}\rtimes O(1,p).
\end{equation}
Then, we can write $G_{p}$ in the form
\begin{equation}\label{eq:38}
G_{p}=\left( \mathbb{R}^{(p+1)(d-p-1)}\rtimes O(1,p) \right)\rtimes O(d-p-1).
\end{equation}

\subsubsection{The Metric Tensors}
In GR, a smooth Lorentzian metric provides a notion of orthonormal frames. These form a principal $O(1,d-1)$-bundle. In particular, this means that at each spacetime point, the set of orthonormal frames are a $O(1,d-1)$-torsor. The purpose of this section is to obtain a set of tensors that realise the group $G_{p}$ in the same sense a Lorentzian metric realises $O(1,d-1)$. These tensors are first obtained on a $d$-dimensional vector space and then extended to the spacetime manifold.

Let us denote $\mathcal{V}$ a real vector space of dimension $d$ and $F(\mathcal{V})$ the corresponding space of frames. For convenience, we shall use the notation $(\tau_{A},e_{i})$ for the elements in $F(\mathcal{V})$, and $(\tau^{A},e^{i})$ for those in $F^{*}(\mathcal{V})$. The group $G_{p}$ acts on $F(\mathcal{V})$ from the right as\footnote{For clarity, let us remark that by this notation we mean $ (\tau_{A},e_{i}){\left(\begin{array}{@{}c|c@{}}{\Lambda^{A}}_{B} & 0  \\\hline{v^{i}}_{B} & {R^{i}}_{j}  \\ \end{array}\right)}=\left({\Lambda^{A}}_{B}\tau_{A}+{v^{i}}_{B}{e}_{i}, {R^{i}}_{j}e_{i}\right)$}
\begin{align}\label{eq:2121}
F(\mathcal{V})\times G_{p} &\longrightarrow F(\mathcal{V})\\
\left((\tau_{A},e_{i}),g\right) & \mapsto (\tau_{B},e_{j})\cdot g= (\tau_{A},e_{i}){\left(\begin{array}{@{}c|c@{}}{\Lambda^{A}}_{B} & 0  \\\hline{v^{i}}_{B} & {R^{i}}_{j}  \\ \end{array}\right)}. \notag
\end{align}
Notice that, if $(\tau_{A},e_{i}),(\tau'_{B},e'_{j})\in F(\mathcal{V})$ are related by \eqref{eq:2121}, then the associated dual frames $(\tau^{A},e^{i}),(\tau'^{B},e'^{j})\in F^{*}(\mathcal{V})$ transform as
\begin{equation}\label{eq:2122}
{\left(\begin{array}{c}\tau'^{A}  \\ e'^{i} \end{array}\right)}={\left(\begin{array}{@{}c|c@{}}{\left(\Lambda^{-1}\right)^{A}}_{B} & 0  \\\hline-{\left(R^{-1}\right)^{i}}_{k}{v^{k}}_{C}{\left(\Lambda^{-1}\right)^{C}}_{B} & {\left(R^{-1}\right)^{i}}_{j}  \\ \end{array}\right)}{\left(\begin{array}{c}\tau^{B}  \\ e^{j} \end{array}\right)}.
\end{equation}
The tensor equalities
\begin{equation}\label{eq:2123}
\eta_{AB}\tau^{A}\otimes\tau^{B}=\eta_{AB}\tau'^{A}\otimes\tau'^{B}, \,\,\,\,\,\,\,\,\, \delta^{ij}e_{i}\otimes e_{j}=\delta^{ij}e'_{i}\otimes e'_{j},
\end{equation}
are manifest. Thus, taking $(\tau_{A},e_{i})\in F(\mathcal{V})$ and defining $\tau:=\eta_{AB}\tau^{A}\otimes\tau^{B}$ and $h:=\delta^{ij}e_{i}\otimes e_{j}$, it is clear that the orbit of $G_{p}$ through $(\tau_{A},e_{i})$ consists of bases orthonormal with respect to $\tau$ and $h$. Although less manifest, there is a third tensor that can be constructed and that is invariant under the action of $G_{p}$. Defining $\tau$ as before, a Riemannian metric can be defined on $\text{Ker}(\tau)$\footnote{We take $\text{Ker}(\tau)=\{v\in\mathcal{V} \,\, \vert \,\, \tau(v,\cdot)=0 \}$, and notice $\text{dim}\left(\text{Ker}(\tau)\right)=d-(p+1)$.} as $\gamma:=\left(\delta_{ij}e^{i}\otimes e^{j}\right)\rvert_{\text{Ker}(\tau)}$. Again, from \eqref{eq:2121} it follows that the orbit of $G_{p}$ through $(\tau_{A},e_{i})$ is formed by bases orthonormal with respect to $\tau$ and $\gamma$.

Reversing the perspective, it is sensible to expect that for a given pair of tensors $(\tau,\gamma)$ the corresponding set of orthonormal frames is precisely the desired $G_{p}$-torsor. This is formally expressed in the following definition and proposition.
\begin{definition}\label{def:2121}
\textit{Let $\tau\in \vee^{2}\mathcal{V}^{*}$ with $\text{rank}(\tau)=p+1$ and Lorentzian signature, and let $\gamma$ be a metric of Riemannian signature in $\text{Ker}(\tau)$. We define the space of Galilean $p$-frames, denoted $f_{p}(\mathcal{V},\tau,\gamma)$, as the space of frames that are orthonormal with respect to $\tau$ and $\gamma$, that is}
\begin{equation}\label{eq:2123A}
f_{p}(\mathcal{V},\tau,\gamma):=\left\{(\tau_{A},e_{i})\in F(\mathcal{V})\,\,\vert \{e_{i}\}\in\text{Ker}(\tau), \,\,\,\,\tau(\tau_{A},\tau_{B})=\eta_{AB}, \,\,\,\, \gamma(e_{i},e_{j})=\delta_{ij} \right\}.
\end{equation}
\end{definition}

\begin{proposition}\label{prp:2121}
\textit{The space $f_{p}(\mathcal{V},\tau,\gamma)$ is a $G_{p}$-torsor with respect to the (right) action}
\begin{align}\label{eq:2125}
f_{p}(\mathcal{V},\tau,\gamma)\times G_{p} &\longrightarrow f_{p}(\mathcal{V},\tau,\gamma),\\
(\left(\tau_{A},e_{i}),g\right) & \mapsto (\tau_{B},e_{j})\cdot g=(\tau_{A},e_{i}){\left(\begin{array}{@{}c|c@{}}{\Lambda^{A}}_{B} & 0  \\\hline{v^{i}}_{B} & {R^{i}}_{j}  \\ \end{array}\right)}. \notag
\end{align}

\begin{proof}
Regularity of the action can be proven by showing that for each $(\tau_{B},e_{j})\in f_{p}(\mathcal{V},\tau,\gamma)$ the map $G_{p}\to f_{p}(\mathcal{V},\tau,\gamma)$ given by $g\mapsto (\tau_{B},e_{j})\cdot g$ is a bijection. That it is injective follows immediately by construction. To show that it is surjective take any $(\tau'_{B},e'_{j})\in f_{p}(\mathcal{V},\tau,\gamma)$. Since both $(\tau'_{B},e'_{j})$ and $(\tau_{B},e_{j})$ belong to $F(\mathcal{V})$ they must be related by
\begin{equation}\label{eq:2127}
(\tau'_{B},e'_{j})=(\tau_{B},e_{j})M,
\end{equation}
where $M\in GL(d)$. Decomposing $M$ suggestively as
\begin{equation}\label{eq:2128}
M={\left(\begin{array}{@{}c|c@{}}{D^{A}}_{B} & {u^{A}}_{j} \\\hline{v^{i}}_{B} & {F^{i}}_{j}  \\ \end{array}\right)},
\end{equation}
one has
\begin{equation}\label{eq:2129}
(\tau'_{B},e'_{j})=({D^{A}}_{B}\tau_{A}+{v^{i}}_{B}e_{i},{u^{A}}_{j}\tau_{A}+{F^{i}}_{j}e_{i}).
\end{equation}
Now by imposing that both frames are in $f_{p}(\mathcal{V},\tau,\gamma)$ one gets
\begin{align}\label{eq:21210}
&i)\,\,\,\,\, \eta_{BC}=\tau(\tau'_{B},\tau'_{C})={D^{A}}_{B}{D^{E}}_{C}\tau(\tau_{A},\tau_{E})={D^{A}}_{B}{D^{E}}_{C}\eta_{AE}\rightarrow {D^{A}}_{B}\in O(1,p),\\ \notag
&ii)\,\,\,\,\,0=\tau(e'_{j},\cdot)={u^{A}}_{j}\tau(\tau_{A},\cdot)\rightarrow 0={u^{A}}_{j}\eta_{AB}\rightarrow {u^{A}}_{j}=0,\\
&iii)\,\,\,\,\, \delta_{ij}=\gamma(e'_{i},e'_{j})={F^{k}}_{i}{F^{l}}_{j}\gamma(e_{k},e_{l})={F^{k}}_{i}{F^{l}}_{j}\delta_{kl}\rightarrow {F^{i}}_{j}\in O(d-(p+1)) \notag.
\end{align}
That is, $M\in G_{p}$ and consequently the map $G_{p}\to f_{p}(\mathcal{V},\tau,\gamma)$ given by $g\mapsto (\tau_{B},e_{j})\cdot g$ is surjective.
\end{proof}
\end{proposition}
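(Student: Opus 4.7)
My plan is to verify the two properties that define a torsor, namely that the $G_p$-action is free and transitive. A clean way to establish both simultaneously is to fix a reference frame $(\tau_A, e_i) \in f_p(\mathcal{V},\tau,\gamma)$ and show that the orbit map $\varphi : G_p \to f_p(\mathcal{V},\tau,\gamma)$, $g \mapsto (\tau_A, e_i) \cdot g$, is a bijection. As a preliminary step, one must check that the action is well-defined, i.e.\ that $(\tau_A, e_i) \cdot g \in f_p(\mathcal{V},\tau,\gamma)$ for every $g \in G_p$. This follows quickly from the block structure in \eqref{eq:2121}: the transformed transverse vectors ${R^i}{}_j e_i$ remain in $\text{Ker}(\tau)$ because $R$ only mixes the $e_i$ among themselves; the longitudinal part transforms by $\Lambda \in O(1,p)$ up to an element of $\text{Ker}(\tau)$, so its $\tau$-norm is controlled by $\eta_{AB}$; and the $\gamma$-norms are preserved because $R \in O(d-p-1)$.

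Injectivity of $\varphi$ is then immediate: since $(\tau_A, e_i)$ is a basis of $\mathcal{V}$, if $g$ fixes it, comparing components forces $\Lambda = \mathrm{id}$, $R = \mathrm{id}$, and $v = 0$, so $g$ is the identity of $G_p$.

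The substance of the proof lies in surjectivity. Given any second Galilean $p$-frame $(\tau'_B, e'_j) \in f_p(\mathcal{V},\tau,\gamma)$, there is a unique $M \in GL(d)$ with $(\tau'_B, e'_j) = (\tau_A, e_i)M$, coming from the change of basis on $\mathcal{V}$. I would write $M$ in block form adapted to the partition $\{A,i\}$ as in \eqref{eq:2128}, and then translate the three defining conditions of $f_p(\mathcal{V},\tau,\gamma)$ into constraints on the blocks: the kernel condition $e'_j \in \text{Ker}(\tau)$ forces the upper-right block ${u^A}{}_j$ to vanish; the condition $\tau(\tau'_B, \tau'_C) = \eta_{BC}$ then forces the longitudinal block to lie in $O(1,p)$; and $\gamma(e'_i, e'_j) = \delta_{ij}$ forces the transverse block to lie in $O(d-p-1)$. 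The off-diagonal block ${v^i}{}_B$ remains completely unconstrained, reflecting the non-relativistic boost freedom. Hence $M \in G_p$.

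The main obstacle, modest as it is, lies in the argument that ${u^A}{}_j$ must vanish. One has to exploit the fact that $\tau$, a symmetric tensor of rank $p+1$ with kernel $\text{span}(e_i)$, is non-degenerate on the complementary subspace $\text{span}(\tau_A)$: only then does contracting the relation $0 = \tau(e'_j,\,\cdot\,) = {u^A}{}_j \tau(\tau_A,\,\cdot\,)$ against $\tau_B$ eliminate ${u^A}{}_j$ entirely. Without this non-degeneracy, the kernel condition on $e'_j$ would not suffice to rule out longitudinal mixing, and $M$ would not in general land in $G_p$.
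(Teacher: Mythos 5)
Your proposal is correct and follows essentially the same route as the paper: fix a reference frame, show the orbit map is a bijection, and for surjectivity decompose the $GL(d)$ change-of-basis matrix into blocks and translate the defining conditions of $f_{p}(\mathcal{V},\tau,\gamma)$ into the constraints ${u^{A}}_{j}=0$, $D\in O(1,p)$, $F\in O(d-p-1)$ with the boost block ${v^{i}}_{B}$ unconstrained. The only additions beyond the paper's argument are the explicit check that the action is well defined and the remark that killing ${u^{A}}_{j}$ uses the invertibility of $\eta_{AB}$, both of which the paper leaves implicit.
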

As a corollary of Proposition \ref{prp:2121}, it follows that the space of dual Galilean $p$-frames, $f^{*}_{p}(\mathcal{V},\tau,\gamma)$, is a $G_{p}$-torsor with respect to the (left) action
\begin{align}\label{eq:2125A}
f^{*}_{p}(\mathcal{V},\tau,\gamma)\times G_{p} &\longrightarrow f^{*}_{p}(\mathcal{V},\tau,\gamma)\\
\left(\left(\begin{array}{c}\tau^{A}  \\ e^{i} \end{array}\right),g\right) & \mapsto g\cdot\left(\begin{array}{c}\tau^{B}  \\ e^{j} \end{array}\right)={\left(\begin{array}{@{}c|c@{}}{{\Lambda}^{B}}_{A} & 0  \\\hline{v^{j}}_{A} & {{R}^{j}}_{i}  \\ \end{array}\right)}{\left(\begin{array}{c}\tau^{A}  \\ e^{i} \end{array}\right)}. \notag
\end{align}

Thus, the pair $(\tau,\gamma)$ defines the desired reduction of frames. However, the fact that $\gamma$ is only defined in $\text{Ker}(\tau)$ can be inconvenient in practice. Fortunately, one can provide an alternative but equivalent structure in which this problem is not present (a similar thing happens in the more familiar Leibnizian structures \cite{XBKM}). Indeed, given a $\tau$, prescribing a tensor $h\in \vee^{2}\mathcal{V}$ satisfying $h^{ab}\tau_{bc}=0$, with $\text{rank}(h)=d-(p+1)$ and Riemannian signature is equivalent to prescribing a $\gamma$ \footnote{To see this, take any basis $\{e_{i}\}$ of $\text{Ker}(\tau)$. The tensors $h\in \vee^{2}\text{Ker}(\tau)$ with $\text{rank}(h)=d-(p+1)$ and Riemannian signature can be uniquely written as $h=h^{ij}e_{i}\otimes e_{j}$ where $h^{ij}$ are real numbers defining a symmetric $\left(d-p-1\right)\times \left(d-p-1\right)$ matrix which is positive definite. It is clear that the space of such tensors and the space of Riemannian metrics in $\text{Ker}(\tau)$ are canonically bijective through the map $h^{ij}=(\gamma_{ij})^{-1}$ where $\gamma_{ij}=\gamma(e_{i},e_{j})$ (in this context, the word 'canonical' means that the bijection does not depend on the choice of basis). Finally, for any tensor $h^{ab}\in\vee^{2}\mathcal{V}$ one has $h^{ab}\tau_{bc}=0\Leftrightarrow h^{ab}\in\vee^{2}\text{Ker}(\tau)$ and the claim follows.}. In what follows, we will refer either to $(\tau,\gamma)$ or $(\tau,h)$ without loss of generality depending on which formulation is more convenient given the context.

The following notion of \textit{longitudinal frames} is crucial in the study of $p$NC connections.
\begin{definition}\label{def:2122}
\textit{Let $\tau\in \vee^{2}\mathcal{V}^{*}$ with $\text{rank}(\tau)=p+1$ and Lorentzian signature, and let $\gamma$ be a metric of Riemannian signature in $\text{Ker}(\tau)$. We define the space of longitudinal frames, denoted $Lf_{p}(\mathcal{V},\tau)$, as the space of ordered $(p+1)$-tuples of vectors that are orthonormal with respect to $\tau$, that is}
\begin{equation}\label{eq:2123A}
Lf_{p}(\mathcal{V},\tau):=\left\{\tau_{A}\in (\mathcal{V})^{p+1} \vert \,\,\,\,\tau(\tau_{A},\tau_{B})=\eta_{AB} \right\}.
\end{equation}
\end{definition}
From Proposition \ref{prp:2121}, it follows that the longitudinal group $LG_{p}$ acts on the space of longitudinal frames. Furthermore, such an action enjoys the property of regularity, as stated in the following proposition.
\begin{proposition}\label{prp:2122}
\textit{The space $Lf_{p}(\mathcal{V},\tau)$ is a $LG_{p}$-torsor with respect to the (right) action}
\begin{align}\label{eq:21211}
Lf_{p}(\mathcal{V},\tau)\times LG_{p} &\longrightarrow Lf_{p}(\mathcal{V},\tau),\\
\left(\tau_{A},\left({\Lambda^{A}}_{B},V_{B}\right)\right) & \mapsto \tau_{A}\cdot({\Lambda^{A}}_{B},V_{B})={\Lambda^{A}}_{B}\tau_{A}+{V}_{B}. \notag
\end{align}
\begin{proof}
As before, we have to check that for any $\tau_{A}\in Lf_{p}(\mathcal{V},\tau)$ the map $LG_{p}\longrightarrow Lf_{p}(\mathcal{V},\tau)$ given by $({\Lambda^{A}}_{B},V_{B})\mapsto{\Lambda^{A}}_{B}\tau_{A}+{V}_{B}$ is a bijection. That it is injective can be seen by acting with $\tau(\tau_{C},\cdot)$ on the equation ${\Lambda^{A}}_{B}\tau_{A}+{V}_{B}={\Lambda'^{A}}_{B}\tau_{A}+{V'}_{B}$, which, after contracting with $\eta^{CD}$, gives ${\Lambda^{D}}_{B}={\Lambda'^{D}}_{B}$ and, hence, $V_{B}=V'_{B}$. Surjectivity follows from Proposition \ref{prp:2121}: pick $\tau_{A}$ and any other $\tau'_{A}\in Lf_{p}(\mathcal{V},\tau)$ and complete each of them to construct two Galilean $p$-frames. From Proposition \ref{prp:2121}, these are going to be related by \eqref{eq:2125}, and in particular $\tau_{A}'={\Lambda^{B}}_{A}\tau_{B}+V_{A}$ for some ${\Lambda^{B}}_{A}\in O(1,p)$ and $V_{A}\in\text{Ker}(\tau)$.
\end{proof}
\end{proposition}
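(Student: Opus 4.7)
The plan is to verify two things: first that the action is well-defined, i.e.\ $\tau_A \cdot (\Lambda, V)$ lies in $Lf_p(\mathcal{V},\tau)$; second, that for any fixed $\tau_A \in Lf_p(\mathcal{V},\tau)$ the orbit map $\phi_{\tau_A}\colon LG_p \to Lf_p(\mathcal{V},\tau)$, $(\Lambda,V) \mapsto \Lambda^A{}_B \tau_A + V_B$, is a bijection. Well-definedness is a short computation: since $V_B \in \text{Ker}(\tau)$, the cross terms vanish in
\begin{equation*}
\tau\bigl(\Lambda^A{}_B \tau_A + V_B,\, \Lambda^C{}_D \tau_C + V_D\bigr) = \Lambda^A{}_B \Lambda^C{}_D\, \eta_{AC},
\end{equation*}
so the condition $\Lambda \in O(1,p)$ guarantees the right-hand side equals $\eta_{BD}$.

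For injectivity of $\phi_{\tau_A}$, the idea is to use $\tau$ itself as a projector onto the longitudinal sector. Suppose $\Lambda^A{}_B \tau_A + V_B = \Lambda'^{A}{}_B \tau_A + V'_B$. Evaluating $\tau(\tau_C, \cdot)$ on both sides and using $\tau(\tau_A,\tau_C) = \eta_{AC}$ together with $\tau(V_B,\cdot) = \tau(V'_B,\cdot) = 0$, one obtains $\Lambda^A{}_B \eta_{AC} = \Lambda'^{A}{}_B \eta_{AC}$; contracting with $\eta^{CD}$ yields $\Lambda = \Lambda'$, and then $V_B = V'_B$ follows by subtraction.

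For surjectivity, the natural strategy is to reduce to Proposition~\ref{prp:2121}. Given any other $\tau'_A \in Lf_p(\mathcal{V},\tau)$, pick an orthonormal basis $\{e_i\}$ of $\text{Ker}(\tau)$ with respect to $\gamma$ and use it to complete both $\tau_A$ and $\tau'_A$ to Galilean $p$-frames $(\tau_A, e_i)$ and $(\tau'_A, e_i)$ in $f_p(\mathcal{V},\tau,\gamma)$. By Proposition~\ref{prp:2121}, there is a unique $g \in G_p$ with $(\tau'_A, e_i) = (\tau_B, e_j)\cdot g$. Reading off the block-lower-triangular form of $g$ in \eqref{eq:32}, the transverse block must be the identity (since the $e_i$ are unchanged), so $g$ actually lies in $LG_p$, and its longitudinal content realises $\tau'_A = \Lambda^A{}_B \tau_A + V_B$ for some $(\Lambda, V)\in LG_p$.

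The step requiring the most care is the surjectivity argument: it must be clear that the compatibility of $\{e_i\}$ with both frames really forces the transverse rotation $R$ to be trivial, so that only longitudinal data survive. Once this is noted, the result is a direct consequence of Proposition~\ref{prp:2121}; injectivity and well-definedness are short calculations exploiting that $\tau$ annihilates $\text{Ker}(\tau)$.
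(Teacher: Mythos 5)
Your proposal is correct and follows essentially the same route as the paper: injectivity by applying $\tau(\tau_{C},\cdot)$ and contracting with $\eta^{CD}$, and surjectivity by completing both longitudinal frames to Galilean $p$-frames and invoking Proposition \ref{prp:2121}. Your additional checks (well-definedness of the action, and choosing the same transverse basis so that the $R$-block is forced to be the identity and $g$ lands in $LG_{p}$) are minor refinements of the paper's argument rather than a different approach.
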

Finally, we introduce one last notion of frames which can be thought of as the 'dual' of longitudinal frames.
\begin{definition}\label{def:2123}
\textit{Let $\tau\in \vee^{2}\mathcal{V}^{*}$ with $\text{rank}(\tau)=p+1$ and Lorentzian signature, and let $\gamma$ be a metric of Riemannian signature in $\text{Ker}(\tau)$. We define the space of longitudinal co-frames, denoted $LCf_{p}(\mathcal{V},\tau)$, as the following space of ordered $(p+1)$-tuples of 1-forms} 
\begin{equation}\label{eq:21212}
LCf_{p}(\mathcal{V},\tau):=\left\{\tau^{A}\in (\mathcal{V}^{*})^{p+1} \vert \,\,\,\, \tau=\eta_{AB}\tau^{A}\otimes\tau^{B} \right\}.
\end{equation}
\end{definition}
We shall mention two facts about the space of longitudinal co-frames. First, there is a surjective map $Lf_{p}(\mathcal{V},\tau)\longrightarrow LCf_{p}(\mathcal{V},\tau)$ sending each longitudinal frame $\tau_{A}$ to the unique longitudinal co-frame $\tau^{A}$ satisfying $\tau^{B}(\tau_{A})={\delta^{B}}_{A}$\footnote{It is easy to check that for a given longitudinal frame $\tau_{A}$ there is a unique $(p+1)$-tuple of 1-forms satisfying $\tau=\eta_{AB}\tau^{A}\tau^{B}$ and $\tau^{B}(\tau_{A})={\delta^{B}}_{A}$.}. Nevertheless, this map is not injective as any two longitudinal frames related by a pure boost $\tau'_{A}=\tau_{A}+V_{A}$ will map to the same longitudinal co-frame. Hence, for a given longitudinal frame we can always use without loss of generality a 'dual' longitudinal co-frame but not conversely. Second, from \eqref{eq:2125A} it follows that $O(1,p)$ acts on $LCf_{p}(\mathcal{V},\tau)$ and, again, the action is regular.
\begin{proposition}\label{prp:2123}
\textit{The space $LCf_{p}(\mathcal{V},\tau)$ is a $O(1,p)$-torsor with respect to the (left) action}
\begin{align}\label{eq:21211}
LCf_{p}(\mathcal{V},\tau)\times O(1,p) &\longrightarrow LCf_{p}(\mathcal{V},\tau),\\
\left(\tau^{A},{\Lambda^{B}}_{A}\right) & \mapsto {{\Lambda}^{B}}_{A}\tau^{A}. \notag
\end{align}
\end{proposition}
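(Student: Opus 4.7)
The strategy mirrors the proofs of Propositions \ref{prp:2121} and \ref{prp:2122}: it suffices to show that for each fixed $\tau^{A}\in LCf_{p}(\mathcal{V},\tau)$ the map $O(1,p)\to LCf_{p}(\mathcal{V},\tau)$, ${\Lambda^{B}}_{A}\mapsto {\Lambda^{B}}_{A}\tau^{A}$, is well defined and bijective. Well-definedness is immediate, since $\eta_{BC}{\Lambda^{B}}_{A}{\Lambda^{C}}_{D}=\eta_{AD}$ for $\Lambda\in O(1,p)$ implies $\eta_{BC}({\Lambda^{B}}_{A}\tau^{A})\otimes({\Lambda^{C}}_{D}\tau^{D})=\eta_{AD}\tau^{A}\otimes\tau^{D}=\tau$.

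For injectivity, I would use the fact that the $(p+1)$-tuple $\{\tau^{A}\}$ is linearly independent: this follows because $\tau=\eta_{AB}\tau^{A}\otimes\tau^{B}$ has rank $p+1$, so any relation ${\Lambda^{B}}_{A}\tau^{A}={\Lambda'^{B}}_{A}\tau^{A}$ forces ${\Lambda^{B}}_{A}={\Lambda'^{B}}_{A}$.

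The main step is surjectivity. Given any $\tau^{A},\tau'^{A}\in LCf_{p}(\mathcal{V},\tau)$, I would extend each to a full dual Galilean $p$-frame and then invoke Proposition \ref{prp:2121} (its corollary for dual frames, equation \eqref{eq:2125A}) to read off the transformation. Concretely: pick any $\gamma$-orthonormal basis $\{e_{i}\}$ of $\text{Ker}(\tau)$. Since the $\tau^{A}$ are linearly independent (as just noted) and vanish on $\text{Ker}(\tau)$ — the identity $\eta_{AB}\tau^{A}(v)\tau^{B}(w)=0$ for all $w\in\mathcal{V}$ and $v\in\text{Ker}(\tau)$ forces $\tau^{A}(v)=0$ by nondegeneracy of $\eta$ — one can choose vectors $E_{A}\in\mathcal{V}$ satisfying $\tau^{B}(E_{A})={\delta^{B}}_{A}$, so that $(E_{A},e_{i})$ is a basis with dual basis $(\tau^{A},e^{i})$ (where $e^{i}(E_{A})=0$, $e^{i}(e_{j})={\delta^{i}}_{j}$). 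A direct check shows $(E_{A},e_{i})\in f_{p}(\mathcal{V},\tau,\gamma)$: one has $\tau(E_{A},E_{B})=\eta_{CD}\tau^{C}(E_{A})\tau^{D}(E_{B})=\eta_{AB}$, while the other two conditions hold by construction. Repeat for $\tau'^{A}$ to obtain a second dual Galilean $p$-frame $(\tau'^{A},e'^{i})$.

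Now by Proposition \ref{prp:2121} and the corollary \eqref{eq:2125A}, the two dual Galilean $p$-frames are related by the left action of some $g\in G_{p}$. Reading the upper block of \eqref{eq:2125A} — which is block-triangular with a pure $O(1,p)$ piece acting on the longitudinal 1-forms — yields $\tau'^{B}={\Lambda^{B}}_{A}\tau^{A}$ with $\Lambda\in O(1,p)$, establishing surjectivity. The only mild subtlety is the extension step, in that the $E_{A}$ are not canonically determined (they are ambiguous up to translations by $\text{Ker}(\tau)$); however, this non-uniqueness is precisely along the fibres of the surjection $Lf_{p}\to LCf_{p}$ noted just after Definition \ref{def:2123}, so it does not affect the resulting $\Lambda$ in the upper block. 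This is the one place where a careful argument is needed, but it reduces to invoking the already-established Proposition \ref{prp:2121}.
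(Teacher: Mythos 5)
Your proof is correct and follows essentially the same route the paper intends when it says the proof is ``analogue to that of Proposition \ref{prp:2122}'': injectivity from linear independence of the $\tau^{A}$ (forced by $\operatorname{rank}(\tau)=p+1$), and surjectivity by completing each longitudinal co-frame to a dual Galilean $p$-frame and reading off the $O(1,p)$ block of the action \eqref{eq:2125A}. Your explicit construction of the completion $(E_{A},e_{i})$ and the remark that the $\mathrm{Ker}(\tau)$-ambiguity in the $E_{A}$ does not affect the resulting $\Lambda$ simply fill in details the paper leaves implicit.
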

The proof is analogue to that of Proposition \ref{prp:2122}.
\subsection{$p$-brane Newton--Cartan Structures}
In the previous section we introduced the metric pieces that at each spacetime point realise $G_{p}$. Here we extend such pieces to the spacetime manifold $M$. These define a $p$NC structure on $M$. New tensor fields that play the role of 'inverse metrics' are also introduced.
\begin{definition}\label{def:1}
\textit{A $p$-brane Newton--Cartan structure is a triplet $(M,\tau,h)$ consisting of a smooth manifold $M$, a smooth rank-$(p+1)$ symmetric tensor $\tau_{ab}$ of Lorentzian signature, and a smooth rank-$(d-p-1)$ symmetric tensor $h^{ab}$ of Riemannian signature, such that}
\begin{equation}\label{eq:39}
h^{ab}\tau_{bc}=0.
\end{equation}
\end{definition}
At each $q\in M$, the field $\tau$ defines a vector subspace $\text{Ker}(q,\tau)\subset T_{q}M$ consisting of the space of vectors $V_{q}\in T_{q}M$ satisfying $\tau(V_{q},\cdot)=0$. This, in turn, defines a distribution on $M$ given by $\text{Ker}(M,\tau):=\bigsqcup\limits_{q\in M}\text{Ker}(q,\tau)$. Sometimes, we will refer to this distribution as the \textit{transverse space}. In general, integrability of transverse space is not assumed, but we will see in the next section that so as to admit a compatible torsion-free connections a $p$NC structure must have an integrable $\text{Ker}(M,\tau)$ (see Proposition \ref{prp:4}). Equivalently, $p$NC structures can also be defined using a Riemannian metric $\gamma$ in $\text{Ker}(M,\tau)$ instead of $h$, as discussed in the previous section. Without loss of generality, we will refer either to $h$ or $\gamma$ depending on which formulation is more convenient given the context. 

A consequence of Proposition \ref{prp:2121} is that, using $(\tau,h)$, it is possible to define a reduction of the frame bundle based on the group $G_{p}$. 
\begin{definition}\label{def:2}
\textit{Let $(M,\tau,h)$ be a $p$NC structure. Let $q\in M$ and let $F(T_{q}M)$ be the space of frames of the tangent space at $q$. We define the space of Galilean $p$-frames at $q$ as the space}
\begin{equation}\label{eq:40}
f_{p}(q,\tau,h):=\left\{ (\tau_{A},e_{i})_{q}\in F(T_{q}M) \,\,\,\,\vert \,\,\,\, \{(e_{i})_{q}\}\in\text{Ker}(q,\tau), \,\,\tau((\tau_{A})_{q},(\tau_{B})_{q})=\eta_{AB}, \,\,\gamma((e_{i})_{q},(e_{j})_{q})=\delta_{ij}   \right\}.
\end{equation}
\textit{Then, the bundle of Galilean $p$-frames is defined as}
\begin{equation}\label{eq:41}
f_{p}(M,\tau,h):=\bigsqcup\limits_{q\in M} f_{p}(q,\tau,h).
\end{equation}
\end{definition}
From the work done in the previous section it follows that the bundle of Galilean $p$-frames is a principal $G_{p}$-bundle. In general, it has no smooth global sections but, for every $q$ in $M$, there is always an open neighbourhood $U\subset M$ in which smooth local sections exist. The space of such sections will be denoted $\Gamma(f_{p}(U,\tau,h))$ and referred to as the space of Galilean $p$-frames on $U$. 

In a way exactly analogous to the frame bundle of Galilean $p$-frames, one can define the bundle of longitudinal frames $Lf_{p}(M,\tau)$ and longitudinal co-frames $LCf_{p}(M,\tau)$. 
\begin{definition}\label{def:3}
\textit{Let $(M,\tau,h)$ be a $p$NC structure and let $q\in M$. We define the space of longitudinal frames at $q$ as the space}
\begin{equation}\label{eq:42}
Lf_{p}(q,\tau):=\left\{ (\tau_{A})_{q}\in (T_{q}M)^{p+1} \,\,\,\,\vert \,\,\,\,\tau((\tau_{A})_{q},(\tau_{B})_{q})=\eta_{AB}\right\}.
\end{equation}
\textit{Then, the bundle of longitudinal frames is defined as}
\begin{equation}\label{eq:43}
Lf_{p}(M,\tau):=\bigsqcup\limits_{q\in M} Lf_{p}(q,\tau).
\end{equation}
\end{definition}
\begin{definition}\label{def:4}
\textit{Let $(M,\tau,h)$ be a $p$NC structure and let $q\in M$. We define the space of longitudinal co-frames at $q$ as the space}
\begin{equation}\label{eq:44}
LCf_{p}(q,\tau):=\left\{ (\tau^{A})_{q}\in (T_{q}^{*}M)^{p+1} \,\,\,\,\vert \,\,\,\, \tau_{q}=\eta_{AB}(\tau^{A})_{q}(\tau^{B})_{q} \right\}.
\end{equation}
\textit{Then, the bundle of longitudinal co-frames is defined as}
\begin{equation}\label{eq:45}
LCf_{p}(M,\tau):=\bigsqcup\limits_{q\in M} LCf_{p}(q,\tau).
\end{equation}
\end{definition}

From Propositions \ref{prp:2122} and \ref{prp:2123} it follows that $Lf_{p}(M,\tau)$ and $LCf_{p}(M,\tau)$ are principal bundles of the groups $LG_{p}$ and $O(1,p)$, respectively. As for the Galilean $p$-frames, $\Gamma(Lf_{p}(U,\tau))$ denotes the space of local smooth sections of $Lf_{p}(U,\tau)$ on an open set $U$ in $M$, and will be referred to as the space of longitudinal frames on $U$. It will work analogously for the bundle of longitudinal co-frames. Let us stress here that, as discussed in the previous section, for each longitudinal frame $\tau_{A}\in Lf_{p}(U,\tau)$, there is a unique dual longitudinal co-frame $\tau^{A}\in LCf_{p}(U,\tau)$ (the converse, however, is not true). Therefore, given a $\tau_{A}$, a dual $\tau^{A}$ can be used without loss of generality. 

The results in the propositions of the previous section hold here for each fiber of the corresponding bundle, and can be extended in the obvious way to the local smooth sections. For the sake of clarity, we write explicitly the case of longitudinal frames. It works analogously for the rest of bundles.

\begin{proposition}\label{prp:1}
\textit{The space of longitudinal frames on $U\subset M$, $\Gamma(Lf_{p}(U,\tau))$, is the set of $(p+1)$-tuples of smooth vector fields in $U$, $\tau_{A}\in\left(\Gamma(TU)\right)^{p+1}$, satisfying}
\begin{equation}\label{eq:46}
\tau(\tau_{A},\tau_{B})=\eta_{AB},
\end{equation}
\textit{and it is a torsor of the group $C^{\infty}(U,LG_{p})$ of $C^{\infty}$ functions from $U$ to the longitudinal group $LG_{p}$, with (right) action}
\begin{align}\label{eq:47}
\Gamma(Lf_{p}(U,\tau))\times C^{\infty}(U,LG_{p}) &\longrightarrow \Gamma(Lf_{p}(U,\tau)),\\
\tau_{A},\left({\Lambda^{A}}_{B},V_{B}\right) & \mapsto \tau_{A}\cdot({\Lambda^{A}}_{B},V_{B})={\Lambda^{A}}_{B}\tau_{A}+{V}_{B}, \notag
\end{align}
\textit{where $\Lambda\in C^{\infty}(U,O(1,p))$ and $V_{A}\in \left(\Gamma(\text{Ker}(U,\tau))\right)^{p+1}$}.
\end{proposition}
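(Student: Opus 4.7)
The plan is to reduce the statement to its pointwise counterpart, namely Proposition \ref{prp:2122}, and promote that fiberwise torsor structure to smooth sections by exhibiting the transition element as a smooth function.

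First I would establish the characterisation of $\Gamma(Lf_p(U,\tau))$. By Definition \ref{def:3}, $Lf_p(M,\tau)$ is the disjoint union of the fibers $Lf_p(q,\tau)$. A smooth local section on $U$ is therefore a smooth map $\sigma:U\to Lf_p(U,\tau)$ with $\pi\circ\sigma=\mathrm{id}_U$, which unpacks to an ordered $(p+1)$-tuple $\tau_A$ of smooth vector fields on $U$ satisfying $\tau(\tau_A,\tau_B)=\eta_{AB}$ everywhere. This is essentially a bookkeeping step, relying only on the definitions.

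Next I would verify that the prescribed action of $C^\infty(U,LG_p)$ on $\Gamma(Lf_p(U,\tau))$ is well-defined. Given $\tau_A\in\Gamma(Lf_p(U,\tau))$ and $({\Lambda^A}_B,V_B)\in C^\infty(U,LG_p)$, the tuple ${\Lambda^A}_B\tau_A+V_B$ is clearly smooth, and at each $q\in U$ it lies in $Lf_p(q,\tau)$ by fiberwise application of Proposition \ref{prp:2122}. The group axioms (identity, compatibility with composition in $LG_p$) hold pointwise, so they hold globally.

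The crucial step is regularity, i.e.\ that for any fixed $\tau_A\in\Gamma(Lf_p(U,\tau))$ the map $C^\infty(U,LG_p)\to\Gamma(Lf_p(U,\tau))$, $g\mapsto\tau_A\cdot g$, is a bijection. Injectivity is immediate from pointwise regularity in Proposition \ref{prp:2122}: if two sections $g,g'$ act identically, they agree at every $q$. For surjectivity, given another section $\tau'_A\in\Gamma(Lf_p(U,\tau))$, Proposition \ref{prp:2122} produces at every point a unique $g(q)\in LG_p$ with $\tau'_A|_q=\tau_A|_q\cdot g(q)$. The content here is showing that $q\mapsto g(q)$ is smooth, and this is where I expect the only substantive work to lie. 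The explicit formulas used in the proof of Proposition \ref{prp:2122} give this away: contracting $\tau'_B={\Lambda^A}_B\tau_A+V_B$ with $\tau(\tau_C,\cdot)$ and using $\tau(\tau_A,\tau_C)=\eta_{AC}$ yields
\begin{equation*}
{\Lambda^D}_B=\eta^{DC}\tau(\tau_C,\tau'_B),\qquad V_B=\tau'_B-{\Lambda^A}_B\tau_A,
\end{equation*}
so both ${\Lambda^A}_B$ and $V_B$ are built algebraically from the smooth fields $\tau,\tau_A,\tau'_A$ and the constants $\eta^{AB}$; hence they are smooth on $U$. Moreover $V_B$ automatically lies in $\Gamma(\mathrm{Ker}(U,\tau))$ because by construction $\tau(V_B,\tau_C)=\tau(\tau'_B,\tau_C)-{\Lambda^A}_B\eta_{AC}=0$. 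This finishes the proof, with the pointwise torsor structure of Proposition \ref{prp:2122} upgraded to sections via the observation that the transition element is recovered by smooth tensor operations on the two frames.
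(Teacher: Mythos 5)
Your proposal is correct and is essentially what the paper intends: the paper simply states that the proof is the natural generalisation of Proposition \ref{prp:2122}, and you carry this out by invoking the pointwise torsor structure and then supplying the only genuinely new ingredient, namely smoothness of the transition element via the explicit formulas ${\Lambda^D}_B=\eta^{DC}\tau(\tau_C,\tau'_B)$ and $V_B=\tau'_B-{\Lambda^A}_B\tau_A$.
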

The proof is the natural generalisation of that in Proposition \ref{prp:2122}. 

Finally, it will be useful to introduce a pair of tensor fields that play the role of inverse metrics of $\tau_{ab}$ and $h^{ab}$. These, however, are not unique, because their definition depends on the chosen way of projecting vectors $X_{q}\in T_{q}M$ into $\text{Ker}(q,\tau)$. 
\begin{definition}\label{def:5}
\textit{Let $\tau_{A}$ be a longitudinal frame. We define the projector associated to $\tau_{A}$ as the map $\overset{\tau}{P}:\Gamma(TM)\longrightarrow \Gamma(\text{Ker}(M,\tau))$ given by}
\begin{equation}\label{eq:48}
\overset{\tau}{P}(X)=X-\tau^{A}(X)\tau_{A},
\end{equation}
\textit{where $X\in\Gamma(TM)$ and $\tau^{A}$ is the dual longitudinal co-frame of $\tau_{A}$}\footnote{We recall that for each longitudinal frame $\tau_{A}$ there is a unique dual longitudinal co-frame $\tau^{A}$, as stated in the discussion above Proposition \ref{prp:2123}}.
\end{definition}
In what follows, the use of the superscript $\tau$ indicates that the quantity wearing it depends on the choice of longitudinal frame $\tau_{A}$, as it is the case of the projectors \eqref{eq:48}. In index notation, the projectors are given by
\begin{equation}\label{eq:48A}
\overset{\tau}{P}\tensor{\vphantom{P}}{^a_b}={\delta^{a}}_{b}-\tau_{A}^{a}\tau^{A}_{b}.
\end{equation}
With this we can introduce a notion of inverse of $\tau_{ab}$ and $h^{ab}$ as follows.
\begin{definition}\label{def:6}
\textit{Let $\tau_{A}$ be a longitudinal frame. We define $\overset{\tau}{\tau}\tensor{\vphantom{\tau}}{^a^b}$ as}
\begin{equation}\label{eq:49}
\overset{\tau}{\tau}\tensor{\vphantom{\tau}}{^a^b}:=\eta^{BC}\tau_{B}^{a}\tau_{C}^{b},
\end{equation}
\textit{and $\overset{\tau}{h}_{ab}$ as}
\begin{equation}\label{eq:50}
\overset{\tau}{h}(X,Y):=\gamma(\overset{\tau}{P}(X),\overset{\tau}{P}(Y)),
\end{equation}
\textit{where $X,Y\in \Gamma(TM)$.}
\end{definition}
These fields satisfy the usual orthogonality conditions
\begin{align}\label{eq:51}
&\overset{\tau}{\tau}\tensor{\vphantom{\tau}}{^a^b}\tau_{ab}=p+1,\\
&\overset{\tau}{\tau}\tensor{\vphantom{\tau}}{^a^b}\overset{\tau}{h}_{bc}=0,\\
&h^{ab}\overset{\tau}{h}_{bc}+\overset{\tau}{\tau}\tensor{\vphantom{\tau}}{^a^b}\tau_{bc}={\delta^{a}}_{c},
\end{align}
which can be easily checked by, for example, working in a Galilean $p$-frame. To conclude this section, in the following proposition we provide the transformation law of these new pieces when moving from one longitudinal frame to another.
\begin{proposition}\label{prp:2}
\textit{Let $\tau_{A},\tau'_{A}$ be two longitudinal frames. These are related by (see Proposition \ref{prp:1})}
\begin{equation}\label{eq:52}
\tau'_{A}={\Lambda^{B}}_{A}\tau_{B}+V_{A},
\end{equation}
\textit{with $\Lambda\in C^{\infty}(M,O(1,p))$ and $V_{A}\in \left(\Gamma(\text{Ker}(M,\tau))\right)^{p+1}$. Then, the associated projectors and inverse metric fields are related by}
\begin{align}\label{eq:53}
&\overset{\tau'}{P}\tensor{\vphantom{P}}{^a_b}= \overset{\tau}{P}\tensor{\vphantom{P}}{^a_b}-{(\Lambda^{-1})^{A}}_{B}\tau^{B}_{b}V^{a}_{A},\\
&\overset{\tau'}{\tau}\tensor{\vphantom{\tau}}{^a^b}=\overset{\tau}{\tau}\tensor{\vphantom{\tau}}{^a^b}+2\eta^{AB}{\Lambda^{C}}_{B}V^{(a}_{A}\tau^{b)}_{C}+\eta^{AB}V^{a}_{A}V^{b}_{B},\\
&\overset{\tau'}{h}_{ab}=\overset{\tau}{h}_{ab}-2{(\Lambda^{-1})^{A}}_{B}V^{c}_{A}\overset{\tau}{h}_{c(b}\tau^{B}_{a)}+{(\Lambda^{-1})^{A}}_{B}{(\Lambda^{-1})^{C}}_{D}\overset{\tau}{h}(V_{A},V_{C})\tau^{B}_{a}\tau^{D}_{b}.
\end{align}
\end{proposition}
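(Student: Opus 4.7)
The plan is to reduce the three transformation laws to a single preliminary calculation, namely the transformation of the dual longitudinal co-frame $\tau^{A}$, and then substitute into Definitions \ref{def:5} and \ref{def:6}. The key observation is that the shift vectors $V_{A}$ lie in $\text{Ker}(M,\tau)$, and for any such vector $W$ one has $\eta_{AB}\tau^{A}(W)\tau^{B}=\tau(W,\cdot)=0$; since $\{\tau^{B}\}$ is a co-frame, this forces $\tau^{A}(V_{B})=0$ for every $A,B$.

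First I would establish that the co-frame dual to the boosted frame $\tau'_{A}=\Lambda^{B}{}_{A}\tau_{B}+V_{A}$ is simply $\tau'^{A}=(\Lambda^{-1})^{A}{}_{B}\tau^{B}$. Both defining properties are immediate: the pairing $\tau'^{A}(\tau'_{C})=(\Lambda^{-1})^{A}{}_{B}\Lambda^{D}{}_{C}\tau^{B}(\tau_{D})+(\Lambda^{-1})^{A}{}_{B}\tau^{B}(V_{C})$ reduces to $\delta^{A}_{C}$ using $\tau^{B}(V_{C})=0$, and $\eta_{AB}\tau'^{A}\otimes\tau'^{B}=\tau$ follows from $\Lambda\in O(1,p)$.

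Next, plug the expressions for $\tau'_{A}$ and $\tau'^{A}$ into \eqref{eq:48A}:
\begin{equation*}
\overset{\tau'}{P}\tensor{\vphantom{P}}{^a_b}=\delta^{a}{}_{b}-(\Lambda^{B}{}_{A}\tau_{B}^{a}+V_{A}^{a})(\Lambda^{-1})^{A}{}_{C}\tau^{C}_{b},
\end{equation*}
and the identity $\Lambda^{B}{}_{A}(\Lambda^{-1})^{A}{}_{C}=\delta^{B}{}_{C}$ collapses the first piece into $\overset{\tau}{P}{}^{a}{}_{b}$, leaving exactly the claimed correction. For the inverse metric, insert $\tau'_{A}$ into \eqref{eq:49} and expand the binomial; the $\tau\tau$-term gives $\overset{\tau}{\tau}{}^{ab}$ after using $\eta^{BC}\Lambda^{D}{}_{B}\Lambda^{E}{}_{C}=\eta^{DE}$, the cross terms combine by symmetrisation in $(a,b)$, and the $VV$-term is already in the stated form.

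Finally, for $\overset{\tau'}{h}_{ab}$ I would use bilinearity of $\gamma$ in \eqref{eq:50} together with the already-derived law for $\overset{\tau'}{P}$. Since $V_{A}\in\Gamma(\text{Ker}(M,\tau))$, we have $\overset{\tau}{P}(V_{A})=V_{A}$, so any $\gamma$-inner product involving $V_{A}$ can be rewritten through $\overset{\tau}{h}$; the cross terms then assemble into $-2(\Lambda^{-1})^{A}{}_{B}V_{A}^{c}\overset{\tau}{h}_{c(b}\tau^{B}_{a)}$ after symmetrising in $(a,b)$, and the quadratic piece takes the claimed form directly. I expect no genuine obstacles: the only subtle point is the initial identification of $\tau'^{A}$, which hinges on the fact (already exploited at the vector-space level in Definition \ref{def:2123} and the footnote thereafter) that $V_{A}$ has vanishing pairing with the longitudinal co-frame; once this is in hand, each of the three formulas is a short algebraic expansion.
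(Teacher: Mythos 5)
Your proposal is correct and follows essentially the same route as the paper, which simply notes that the proposition is proved by plugging \eqref{eq:52} into the definitions of the projector and inverse metrics; your extra care in first deriving $\tau'^{A}={(\Lambda^{-1})^{A}}_{B}\tau^{B}$ from $\tau^{A}(V_{B})=0$ is exactly the implicit preliminary step the paper relies on (cf.\ \eqref{eq:A64}).
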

Proving this proposition consists in just plugging \eqref{eq:52} into the definitions of the projector and inverse metrics given above.

As a last comment, it is worth noting that $p$NC structures interpolate between Leibnizian structures when $p$ is set to zero, and Lorentzian structures when $p=d-1$. While the latter case is obvious, it is interesting to discuss the former in more detail. First, notice that for $p=0$ the structure group is $G_{0}=Gal(d)$. Furthermore, the space of longitudinal co-frames is a torsor of the group composed of a single element (see Proposition \ref{prp:2123}). That is, there is a unique 1-form $\tau_{a}$ for which $\tau_{ab}=\tau_{a}\tau_{b}$ and, in addition, it satisfies $\tau_{a}h^{ab}=0$. Thus, the pair $(\tau_{a},h^{ab})$ forms a Leibnizian structure. More generally, when $p=0$ all the results of this section reduce manifestly to those of Leibnizian structures  \cite{XBKM,BernalSanchez}.

\section{$p$-brane Newton--Cartan Connections}

In order to describe gravity in non-relativistic regimes, the metric structure that realises the symmetry group is, in general, not enough, and it must be supplemented with a compatible connection. In this section we study the space of connections on $TM$ compatible with a given $p$NC structure\footnote{Whenever we make use of the term 'connection' in this work, we refer to a Koszul connection on $TM$. To avoid confusion, let us remind that a Koszul connection on $TM$ is a map $\nabla\colon\Gamma(TM)\to\text{End}\Gamma(TM)$ which is $C^{\infty}(M)$-linear, and such that for all $X\in\Gamma(TM)$ the endomorphism $\nabla_{X}$ satisfies the Leibniz rule, i.e.\, $\nabla_{X}(fY)=X(f)Y+f\nabla_{X}Y$ for all $f\in C^{\infty}(M)$ and $Y\in \Gamma(TM)$.}. Unlike in the case of relativistic structures, the conditions of compatibility with the tensor structure together with vanishing torsion do \textit{not} determine uniquely a connection. This fact is sometimes referred to as the equivalence problem in the literature \cite{XBKM}. Solving it consists in determining the additional data that has to be prescribed on the manifold (e.g.\, torsion) in order to fix uniquely a connection. In general, given a $p$NC structure, a connection which is torsion-free and compatible with the structure might be non-unique, or might not exist. 

In this section we first classify the $p$NC structures into Aristotelian and Augustinian (in analogy with the classification of Leibnizian structures). The former have a notion of absolute transverse space. The latter are the subclass of Aristotelian structures that admit torsion-free compatible connections. Focusing on the torsion-free case we present a solution of the equivalence problem for Augustinian $p$NC structures. This provides a class of connections that can be thought of as the analogue of the Levi--Civita connection of relativistic structures. These generalise the \textit{torsion-free special connections} of Leibnizian structures. The conventions have been chosen so that our results reduce to those in \cite{XBKM} when $p=0$ and are comparable to those in \cite{SNC} when $p=1$. 

\subsection{The Equivalence Problem in Non-Relativistic Structures}
Let $(M,\tau,h)$ be a $p$NC structure. We denote $\mathcal{D}(M,\tau,h)$ the space of connections on $TM$ compatible with $\tau$ and $h$, that is,
\begin{equation}
i)\,\,\,\nabla \tau=0, \,\,\,\,\,\,\,\,\,\,\, ii)\,\,\,\nabla h=0.
\end{equation}

Similarly, we denote $\mathcal{D}_{0}(M,\tau,h)$ the subspace of $\mathcal{D}(M,\tau,h)$ consisting of the connections with vanishing torsion (as discussed above, this space might be empty). Before focusing on the torsion-free case, here we shall give some general results about $\mathcal{D}(M,\tau,h)$. 

In non-relativistic structures there is no analogue of the Fundamental Theorem of Riemannian Geometry. Such lack can be understood as follows. Consider a spacetime $(M,g)$ where $M$ is a smooth manifold and $g_{ab}$ a metric of Lorentzian signature, and denote by $\mathcal{D}(M,g)$ the space of affine connections on $TM$ compatible with $g_{ab}$. It can be proven that the map 
\begin{align}\label{eq:54}
\mathcal{D}(M,g) &\longrightarrow \Gamma(\wedge^{2}T^{*}M\otimes TM),\\
\nabla & \mapsto \text{Tor}(\nabla), \notag
\end{align}
is a bijection. Hence, there exists a unique connection $\overset{g}{\nabla}$ in its kernel, i.e. there exists a unique connection that is torsion-free and compatible with $g_{ab}$. This is, of course, the Levi--Civita connection of $g_{ab}$. In addition, this shows that the condition of compatibility puts no constraints on the torsion. All this discussion is summarised in the following proposition for relativistic structures.
\begin{proposition}\label{prp:3} (See, e.g.\, \cite{XBKM}) \textit{The space $\mathcal{D}(M,g)$ of connections compatible with a Lorentzian structure $(M,g)$ possesses the structure of vector space, the origin of which is the Levi--Civita connection of $g$, $\overset{g}{\nabla}$, and $\mathcal{D}(M,g)$ is then isomorphic to $\Gamma(\wedge^{2}T^{*}M\otimes TM)$.} 
\end{proposition}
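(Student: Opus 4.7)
The plan is to prove the proposition in two stages: first establish that $\mathcal{D}(M,g)$ is an affine space over a certain vector space of tensors, then show that the torsion map realises the claimed linear isomorphism.

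First I would note the standard fact that the difference of any two connections on $TM$ is a $(1,2)$-tensor field. Accordingly, fixing any reference connection turns the space of all connections into an affine space. Specialising to $\mathcal{D}(M,g)$: if $\nabla, \nabla' \in \mathcal{D}(M,g)$, then $S := \nabla' - \nabla \in \Gamma(T^*M \otimes T^*M \otimes TM)$, and imposing $(\nabla' - \nabla)g = 0$ yields, after lowering an index with $g$, the symmetry condition $S_{(ab)c} = 0$. Conversely any such $S$ added to a compatible connection yields another compatible connection. Let $\mathcal{S} \subset \Gamma(T^*M \otimes T^*M \otimes TM)$ denote the subspace cut out by $S_{(ab)c}=0$; then $\mathcal{D}(M,g)$ is an affine space modelled on $\mathcal{S}$.

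Next I would invoke the Fundamental Theorem of (pseudo-)Riemannian Geometry to guarantee the existence and uniqueness of the Levi--Civita connection $\overset{g}{\nabla} \in \mathcal{D}(M,g)$. Choosing it as origin converts $\mathcal{D}(M,g)$ into a vector space isomorphic to $\mathcal{S}$ via $\nabla \mapsto \nabla - \overset{g}{\nabla}$. It remains to identify $\mathcal{S}$ with $\Gamma(\wedge^2 T^*M \otimes TM)$ through the torsion map. A direct calculation in a chart shows that if $\nabla = \overset{g}{\nabla} + S$, then, because $\overset{g}{\nabla}$ is torsion-free, $\text{Tor}(\nabla)^a{}_{bc} = 2 S^a{}_{[bc]}$, so the map $\nabla \mapsto \text{Tor}(\nabla)$ coincides with the $C^{\infty}(M)$-linear map $S \mapsto 2 S^a{}_{[bc]}$ from $\mathcal{S}$ to $\Gamma(\wedge^2 T^*M \otimes TM)$.

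Finally I would prove this last map is a bijection by exhibiting its inverse. Given $T \in \Gamma(\wedge^2 T^*M \otimes TM)$, set
\begin{equation}
S_{abc} := \tfrac{1}{2}\bigl(T_{abc} + T_{bca} - T_{cab}\bigr),
\end{equation}
where indices are raised and lowered with $g$. A short computation using $T_{abc} = -T_{acb}$ shows both $S_{(ab)c} = 0$ and $2S_{a[bc]} = T_{abc}$, which verifies injectivity and surjectivity simultaneously. Combining the isomorphisms $\mathcal{D}(M,g) \cong \mathcal{S} \cong \Gamma(\wedge^2 T^*M \otimes TM)$ yields the claim, and the kernel of the torsion map automatically recovers the uniqueness of $\overset{g}{\nabla}$.

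The only nontrivial input is the existence half of the Fundamental Theorem; the algebraic bijection between $\mathcal{S}$ and antisymmetric $(1,2)$-tensors is purely linear algebra, so the main obstacle is really that this proposition relies on the standard Levi--Civita result as a black box, rather than on any calculation specific to the present framework.
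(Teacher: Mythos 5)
Your argument is correct and follows exactly the route the paper intends: the paper states this proposition without proof (deferring to \cite{XBKM}), asserting only that the torsion map is a bijection, and you supply precisely that via the affine structure of $\mathcal{D}(M,g)$ over the space of difference tensors with $S_{(ab)c}=0$ together with the explicit contorsion-type inverse (the same ``affine map modelled on an isomorphism'' template the paper later formalises in Lemma \ref{lmm:1}). One small caveat: exhibiting a right inverse of $S\mapsto 2S^{a}{}_{[bc]}$ only gives surjectivity, so for injectivity you should either verify that your formula is also a left inverse, or note directly that a tensor $S_{abc}$ antisymmetric in its first two indices and symmetric in its last two vanishes identically by the usual braiding argument.
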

In non-relativistic structures such as $p$NC none of these results hold because the map analogue to \eqref{eq:54},
\begin{align}\label{eq:56}
\mathcal{D}(M,\tau,h) &\longrightarrow \Gamma(\wedge^{2}T^{*}M\otimes TM),\\
\nabla & \mapsto \text{Tor}(\nabla), \notag
\end{align}
is not a bijection. In general, it is nor injective neither surjective. On the one hand, this means that there might be none, or more than one compatible connections with zero torsion. On the other hand, unlike in the relativistic case, compatible connections do not have arbitrary torsion. 

Given a $p$NC structure, the constraints on the torsion of compatible connections are provided in the following proposition. 
\begin{proposition}\label{prp:4}
\textit{Let $(M,\tau,h)$ be a $p$-brane Newton--Cartan structure and let $\nabla$ be a connection in $\mathcal{D}(M,\tau,h)$, the space of connections compatible with $(\tau,h)$. Then, for all longitudinal frames $\tau_{A}$ the following equations hold}
\begin{align}\label{eq:57}
&i)\,\,\, \tau^{A}\left(T(V,W)\right)=\text{d}\tau^{A}(V,W),\\ 
&ii)\,\,\, \eta_{A(B}\tau^{A}\left(T(\tau_{C)},V)\right)=\eta_{A(B}\text{d}\tau^{A}\left(\tau_{C)},V\right),
\end{align} 
\textit{where $T$ is the torsion tensor of $\nabla$ and $V,W\in \Gamma\left(\text{Ker}(M,\tau)\right)$ are any pair of transverse vector fields.}
\end{proposition}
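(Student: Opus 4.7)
The plan is to start from the standard formula relating exterior derivative, covariant derivative and torsion, valid for any affine connection and any $1$-form $\omega$:
\begin{equation*}
d\omega(X,Y) = (\nabla_X \omega)(Y) - (\nabla_Y \omega)(X) + \omega(T(X,Y)).
\end{equation*}
Applied to $\omega = \tau^A$ (the co-frame dual to $\tau_A$, as in Definition \ref{def:4}), this reduces both $(i)$ and $(ii)$ to controlling the symmetric part of $(\nabla_X \tau^A)(Y)$ on the appropriate arguments. Both conclusions will follow provided one can show that $(\nabla_X \tau^A)(W)=0$ whenever $W\in\Gamma(\text{Ker}(M,\tau))$, and that a suitable symmetrised combination of $(\nabla_V \tau^A)(\tau_C)$ vanishes.

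For the first of these (which handles $(i)$ on the nose and also kills one of the two terms arising in $(ii)$), the key observation is $\tau^A_b\,h^{ba}=0$. Indeed, $h^{ab}$ regarded as a map $T^*M\to TM$ has image contained in $\text{Ker}(M,\tau)$ and is of maximal rank $d-p-1$, so its image is exactly $\text{Ker}(M,\tau)$; meanwhile, $\tau^A$ annihilates $\text{Ker}(M,\tau)$, which follows from the decomposition $\tau_{ab}=\eta_{AB}\tau^A_a\tau^B_b$ together with the non-degeneracy of $\eta_{AB}$ and the linear independence of the $\tau^A_b$. Differentiating $\tau^A_b\,h^{ba}=0$ and using compatibility $\nabla h = 0$ then yields $(\nabla_c \tau^A_b)\,h^{ba}=0$; because the image of $h$ is all of $\text{Ker}(M,\tau)$, this is equivalent to $(\nabla_c \tau^A_b)\,W^b=0$ for every transverse $W$, as required.

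For the remaining ingredient of $(ii)$, I would expand compatibility $\nabla_c\tau_{ab}=0$ using $\tau_{ab}=\eta_{AB}\tau^A_a\tau^B_b$, contract with $(\tau_B)^a(\tau_C)^b$, and apply the duality $\tau^A(\tau_B)=\delta^A_B$ to obtain $\eta_{A(B}(\nabla_c\tau^A_b) (\tau_{C)})^b=0$. Contracting with $V^c$ and combining with the identity above gives $(ii)$ after symmetrisation in $B,C$. The mildly surprising step is the one in the second paragraph: although $(i)$ looks like a statement about $\tau$ alone, it genuinely requires the $\nabla h = 0$ compatibility, because $\nabla_c\tau_{ab}=0$ contracted with two transverse vectors is automatically satisfied ($\tau_{ab}$ already kills such vectors), so the missing information about the "transverse--longitudinal" block of $\nabla \tau^A$ is supplied only through $\nabla h = 0$.
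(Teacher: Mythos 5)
Your argument is correct, and it funnels into the same two intermediate facts that the paper's proof uses --- namely $(\nabla_{c}\tau^{A}_{b})W^{b}=0$ for every transverse $W$, and $\eta_{A(B}(\nabla_{|c|}\tau^{A}_{b})\tau^{b}_{C)}=0$ --- before inserting them into the torsion/exterior-derivative identity. The genuinely different step is the first of these: you derive it from $\nabla h=0$ by differentiating $\tau^{A}_{b}h^{ba}=0$ and using that the image of $h\colon T^{*}M\to TM$ is exactly $\text{Ker}(M,\tau)$, whereas the paper derives it from $\nabla\tau=0$ alone, by contracting $\nabla_{b}\tau_{ac}=0$ with $\tau^{c}_{D}$ to get $\nabla_{b}\tau^{A}_{a}=-\eta^{AD}\eta_{BC}\tau^{c}_{D}\tau^{C}_{a}\nabla_{b}\tau^{B}_{c}$ (equation \eqref{eq:A59}), whose right-hand side is manifestly longitudinal in $a$ and hence annihilated by any transverse $W^{a}$. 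Both routes are valid and of comparable length, but your closing remark misattributes where the information comes from: it is \emph{not} true that the transverse--longitudinal block of $\nabla\tau^{A}$ is controlled only through $\nabla h=0$. What is vacuous is $\nabla_{c}\tau_{ab}=0$ contracted with \emph{two} transverse vectors; its mixed block is not, since $\nabla_{c}\tau_{ab}\,\tau^{a}_{D}V^{b}=\eta_{DB}(\nabla_{c}\tau^{B}_{b})V^{b}$, so setting it to zero already forces $(\nabla_{c}\tau^{B}_{b})V^{b}=0$. Thus $(i)$ is a consequence of $\nabla\tau=0$ by itself, and your proof simply trades one compatibility condition for the other. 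A final small difference: the paper spends the second half of its proof checking that $(i)$ and $(ii)$ are preserved under a change of longitudinal frame $\tau_{A}\mapsto{\Lambda^{B}}_{A}\tau_{B}+V_{A}$; your derivation applies verbatim to an arbitrary longitudinal frame, so the ``for all longitudinal frames'' clause is obtained with no separate covariance check.
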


\begin{proof}
First, we shall prove that $i)$ and $ii)$ hold for one $\tau_{A}$, and after that we will show that if they hold for one, then they hold for all of them. For any affine connection $\nabla$ one has
\begin{equation}\label{eq:A58}
\tau^{A}\left(T(X,Y)\right)=\text{d}\tau^{A}(X,Y)-\left(\nabla_{X}\tau^{A}(Y)-\nabla_{Y}\tau^{A}(X)\right),
\end{equation}
for all $X,Y\in\Gamma(TM)$. Using that $\nabla \tau=0$ one has
\begin{equation}\label{eq:A59}
\nabla_{b}\tau^{A}_{a}=-\eta^{AD}\eta_{BC}\tau^{c}_{D}\tau^{C}_{a}\nabla_{b}\tau^{B}_{c},
\end{equation}
so that \eqref{eq:A58} becomes
\begin{equation}\label{eq:A60}
\tau^{A}\left(T(X,Y)\right)=\text{d}\tau^{A}(X,Y)+\eta^{AD}\eta_{BC}\left(\tau^{C}(Y)\nabla_{X}\tau^{B}(\tau_{D})-\tau^{C}(X)\nabla_{Y}\tau^{B}(\tau_{D})\right),
\end{equation} 
and $i)$ follows immediately. Evaluating \eqref{eq:A60} on $\tau_{E},V$, contracting with $\eta_{AF}$ and symmetrising one gets
\begin{equation}\label{eq:A61}
\eta_{A(F}\tau^{A}\left(T(\tau_{E)},V)\right)=\eta_{A(F}\text{d}\tau^{A}(\tau_{E)},V)-\eta_{B(E}\nabla_{V}\tau^{B}(\tau_{F)}),
\end{equation}
but for all $X\in\Gamma(TM)$ we have
\begin{equation}\label{eq:A62}
0=\nabla_{X}\tau(\tau_{E},\tau_{F})=2\eta_{B(E}\nabla_{X}\tau^{B}(\tau_{F)}),
\end{equation}
so \eqref{eq:A61} reduces to
\begin{equation}\label{eq:A63}
\eta_{A(F}\tau^{A}\left(T(\tau_{E)},V)\right)=\eta_{A(F}\text{d}\tau^{A}(\tau_{E)},V),
\end{equation}
which is $ii)$. Now we have to check that $i)$ and $ii)$ also hold for any other longitudinal frame $\tau'_{A}$. From the results in the previous sections, it follows that any two longitudinal frames $\tau'_{A}$ and $\tau_{A}$ and their corresponding dual co-frames are related by
\begin{align}\label{eq:A64}
\tau_{A}&={\Lambda^{B}}_{A}\tau'_{B}+V_{A},\\ \notag
\tau^{A}&={(\Lambda^{-1})^{A}}_{B}\tau'^{B},\notag
\end{align}
where $V_{A}$ is a transverse vector field and $\Lambda$ a Lorentz matrix. Since $i)$ holds for $\tau_{A}$, we have
\begin{align}\label{eq:A65}
{(\Lambda^{-1})^{A}}_{B}\tau'^{B}\left(T(V,W)\right)=\left(\text{d}{(\Lambda^{-1})^{A}}_{B}\right)\wedge\tau'^{B}(V,W)+{(\Lambda^{-1})^{A}}_{B}\text{d}\tau'^{B}(V,W), 
\end{align}
but the first term in the RHS vanishes and then $i)$ holds also for $\tau'_{A}$. Now we use this result in order to write $ii)$, that holds for $\tau_{A}$, as
\begin{equation}\label{eq:A66}
\eta_{A(B}{\Lambda^{E}}_{C)}{(\Lambda^{-1})^{A}}_{D}\tau'^{D}(T(\tau'_{E},V))={(\Lambda^{-1})^{A}}_{D}\eta_{A(B}{\Lambda^{E}}_{C)}\text{d}\tau'^{D}(\tau'_{E},V)-\eta_{A(B}{\Lambda^{D}}_{C)}\text{d}{(\Lambda^{-1})^{A}}_{D}(V),
\end{equation}
but
\begin{align}\label{eq:A67}
2\eta_{A(B}{\Lambda^{D}}_{C)}\text{d}{(\Lambda^{-1})^{A}}_{D}&=-(\eta_{AB}{(\Lambda^{-1})^{A}}_{D}\text{d}{\Lambda^{D}}_{C}+\eta_{AC}{(\Lambda^{-1})^{A}}_{D}\text{d}{\Lambda^{D}}_{B})\\ \notag
&=-(\eta_{AD}{\Lambda^{A}}_{B}\text{d}{\Lambda^{D}}_{C}+\eta_{AD}{\Lambda^{A}}_{C}\text{d}{\Lambda^{D}}_{B})\\ \notag
&=-\text{d}\left(\eta_{AD}{\Lambda^{A}}_{B}{\Lambda^{D}}_{C}\right)=-\text{d}\eta_{BC}=0,
\end{align}
and after some straightforward manipulation the surviving terms in \eqref{eq:A66} reduce to equation $ii)$ for $\tau'_{A}$.
\end{proof}

These conditions do not depend on the choice of longitudinal frame and, thus, they refer to the structure of the metrics $\tau$ and $h$. In fact, when studying $p$NC geometries as the leading terms of a covariant expansion of General Relativity \cite{TomasDavidLuca2,VdB,ObersExpansion}, it is useful to rewrite the results in Proposition \ref{prp:4} in terms of such metrics and in a general coordinate chart as follows
\begin{align}\label{eq:58}
\tau_{\rho\lambda}{T^{\lambda}}_{\mu\nu}h^{\mu\alpha}h^{\nu\beta}&=\partial_{[\mu}\tau_{\nu]\rho}h^{\mu\alpha}h^{\nu\beta},\\
\left(\overset{\tau}{\tau}\tensor{\vphantom{\tau}}{^\mu_\alpha}\tau_{\rho\lambda}+\overset{\tau}{\tau}\tensor{\vphantom{\tau}}{^\mu_\rho}\tau_{\alpha\lambda}\right){T^{\lambda}}_{\mu\nu}h^{\nu\beta}&=\left(\overset{\tau}{\tau}\tensor{\vphantom{\tau}}{^\mu_\alpha}\partial_{[\mu}\tau_{\nu]\rho}+\overset{\tau}{\tau}\tensor{\vphantom{\tau}}{^\mu_\rho}\partial_{[\mu}\tau_{\nu]\alpha}+\frac{1}{2}\partial_{\nu}\tau_{\alpha\rho}\right)h^{\nu\beta}.
\end{align}

It is natural to classify $p$-brane Newton--Cartan structures in analogy with the classification of Leibnizian structures. Let us recall that a Leibnizian structure with integrable transverse space (or, equivalently, $\tau\wedge\text{d}\tau=0$) is called \textit{Aristotelian}. If furthermore the absolute clock is closed, $\text{d}\tau=0$, we say the structure is \textit{Augustinian}. The following notions constitute a generalisation of Aristotelian and Augustinian structures.
\begin{definition}\label{def:7}
\textit{An Aristotelian $p$-brane Newton--Cartan structure is a $p$-brane Newton--Cartan structure, $(M,\tau,h)$, satisfying }
\begin{align}\label{eq:61}
\text{d}\tau^{A}(V,W)=0,
\end{align} 
\textit{for all transverse vector fields $V,W$, where $\tau_{A}$ and $\tau^{A}$ are a longitudinal frame and its dual.}
\end{definition}

Of course, by Frobenius' theorem (see e.g.\, \cite{Schutz}), Aristotelian $p$NC structures can be defined equivalently as $p$NC structures with integrable transverse space. 

\begin{definition}\label{def:8}
\textit{An Augustinian $p$-brane Newton--Cartan structure is a $p$-brane Newton--Cartan structure, $(M,\tau,h)$, satisfying }
\begin{align}\label{eq:63}
&i)\,\,\, \text{d}\tau^{A}(V,W)=0,\\ 
&ii)\,\,\, \eta_{A(B}\text{d}\tau^{A}\left(\tau_{C)},V\right)=0,\notag
\end{align} 
\textit{for all transverse vector fields $V,W$, where $\tau_{A}$ and $\tau^{A}$ are a longitudinal frame and its dual.}
\end{definition}

We notice that only the class of Augustinian $p$NC structures admit torsion-free compatible connections. Also, note that, for $p=0$, both Aristotelian and Augustinian structures reduce to those in the literature.

Finally, let us compare our compatibility conditions with those in the literature \cite{Malament,XBKM,BernalSanchez}. Given a $p$NC structure we require, in particular,
\begin{equation}\label{eq:65}
\nabla_{a}\tau_{b}\tau_{c}=0.
\end{equation}
In the Leibnizian case ($p=0$), this is more general than the usual compatibility condition in the literature (e.g.\, \cite{XBKM,BernalSanchez})
\begin{equation}\label{eq:66}
\nabla_{a}\tau_{b}=0.
\end{equation}
However, in the special case of $p$NC structures that admit torsion-free compatible connections (i.e.\,in the case of Augustinian structures), condition \eqref{eq:66} follows from \eqref{eq:65} \footnote{Indeed, $\nabla_{a}\tau_{b}\tau_{c}=0$ implies $\tau_{(b}\nabla_{\vert a\vert}\tau_{c)}=0$. Also, from Proposition \ref{prp:4}, torsion-freeness and compatibility of $\nabla$ imply $2\nabla_{[a}\tau_{b]}=0$. Combining these two equations and contracting with $\tau^{c}$, one has $\nabla_{a}\tau_{b}=-\tau^{c}\tau_{b}\nabla_{a}\tau_{c}=-\tau^{c}\tau_{b}\nabla_{c}\tau_{a}$. But $\nabla_{a}\tau_{b}$ is symmetric so  $\nabla_{a}\tau_{b}=\nabla_{(a}\tau_{b)}=-\tau^{c}\tau_{(b}\nabla_{\vert c\vert}\tau_{a)}=0$.}. Hence, both are equivalent and, indeed, we will see that our results for torsion-free connections compatible with Augustinian structures reduce, when $p=0$, to those in the literature.
\subsection{Torsion-free $p$NC Connections}
The space of connections on $TM$ is an affine space modelled on the space of tensors $\Gamma(TM\otimes T^{*}M\otimes T^{*}M)$. Consider an Augustinian $p$NC structure, $(M,\tau,h)$, and denote $\mathcal{D}_{0}(M,\tau,h)$ the space of torsion-free connections compatible with $\tau_{ab}$ and $h^{ab}$. Then, $\mathcal{D}_{0}(M,\tau,h)$ is an affine space modelled on a vector subspace of  $\Gamma(TM\otimes T^{*}M\otimes T^{*}M)$, according to the following proposition.
\begin{proposition}\label{prp:5}
\textit{The space $\mathcal{D}_{0}(M,\tau,h)$ of torsion-free affine connections compatible with an Augustinian $p$NC structure $(M,\tau,h)$ is an affine space modelled on the vector space}
\begin{equation}\label{eq:67}
\mathfrak{V}(M,\tau,h)=\left\{ {S^{a}}_{bc}\in\Gamma(TM\otimes T^{*}M\otimes T^{*}M)\,\,\,\vert\,\,\, {S^{a}}_{[bc]}=0,\,\,\, {S^{d}}_{a(b}\tau_{c)d}=0, \,\,\, h^{d(c}{S^{b)}}_{ad}=0\right\}.
\end{equation}
\end{proposition}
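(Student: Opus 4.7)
The plan is to exploit the standard fact that the difference of two affine connections on $TM$ is a $(1,2)$-tensor field, so the space of all connections on $TM$ is an affine space modelled on $\Gamma(TM\otimes T^{*}M\otimes T^{*}M)$. Once this is in hand, it suffices to verify two claims: (i) the difference $S:=\nabla'-\nabla$ of any two elements of $\mathcal{D}_{0}(M,\tau,h)$ lies in $\mathfrak{V}(M,\tau,h)$, and conversely (ii) $\nabla+S$ belongs to $\mathcal{D}_{0}(M,\tau,h)$ whenever $\nabla\in\mathcal{D}_{0}(M,\tau,h)$ and $S\in\mathfrak{V}(M,\tau,h)$. Taken together these two statements imply that the vector space $\mathfrak{V}(M,\tau,h)$ acts freely and transitively on $\mathcal{D}_{0}(M,\tau,h)$, which is the content of the proposition.

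For direction (i), I would verify the three defining conditions of $\mathfrak{V}$ in turn. The symmetry ${S^{a}}_{[bc]}=0$ is immediate: in any coordinate chart ${\Gamma^{a}}_{[bc]}={\Gamma'^{a}}_{[bc]}=0$ by vanishing torsion of both connections, hence ${S^{a}}_{[bc]}=0$. For the second condition, I would subtract the two compatibility identities $\nabla\tau=0$ and $\nabla'\tau=0$, obtaining
\begin{equation*}
{S^{d}}_{ba}\tau_{dc}+{S^{d}}_{ca}\tau_{bd}=0,
\end{equation*}
which after using ${S^{d}}_{ba}={S^{d}}_{ab}$ and symmetrising over $b,c$ is exactly ${S^{d}}_{a(b}\tau_{c)d}=0$. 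In the same spirit, subtracting $\nabla h=0$ from $\nabla'h=0$ produces ${S^{b}}_{da}h^{dc}+{S^{c}}_{da}h^{bd}=0$, which after invoking torsion-freeness and symmetrising over $b,c$ becomes $h^{d(c}{S^{b)}}_{ad}=0$.

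Direction (ii) is just the same computation read in reverse: given $\nabla\in\mathcal{D}_{0}$ and $S\in\mathfrak{V}$, the connection $\nabla':=\nabla+S$ is torsion-free by the first defining condition of $\mathfrak{V}$, while the second and third conditions guarantee $\nabla'\tau=0$ and $\nabla'h=0$, respectively. I do not anticipate any serious obstacle here; the statement is essentially a tautology about affine differences of compatible connections, and the only bit of care is to invoke the symmetry ${S^{d}}_{ab}={S^{d}}_{ba}$ at the right moment when passing between the raw form of the compatibility identity and its symmetrised counterpart. Non-emptiness of $\mathcal{D}_{0}(M,\tau,h)$ is a logically separate matter, handled later as part of Theorem~\ref{th:1}; strictly, what Proposition~\ref{prp:5} establishes is that \emph{if} $\mathcal{D}_{0}(M,\tau,h)$ is non-empty, then it is a $\mathfrak{V}(M,\tau,h)$-torsor.
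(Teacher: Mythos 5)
Your argument is correct, and it is precisely the routine verification that the paper itself leaves implicit: Proposition~\ref{prp:5} is stated without proof, the intended justification being exactly that the difference of two connections is a tensor and that subtracting the torsion-free and compatibility conditions (using the symmetry ${S^{d}}_{ab}={S^{d}}_{ba}$ to reconcile the index placement fixed by the convention \eqref{concomp} with that in \eqref{eq:67}) yields the three defining conditions of $\mathfrak{V}(M,\tau,h)$, with the converse read off in reverse. Your closing caveat about non-emptiness is also apt; the paper only establishes $\mathcal{D}_{0}(M,\tau,h)\neq\emptyset$ later, via the explicit connection of Proposition~\ref{prp:9}.
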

The action of $\mathfrak{V}(M,\tau,h)$ on $\mathcal{D}_{0}(M,\tau,h)$ is
\begin{align}\label{eq:68}
\mathcal{D}_{0}(M,\tau,h)\times \mathfrak{V}(M,\tau,h)&\longrightarrow \mathcal{D}_{0}(M,\tau,h)\\
({\Gamma^{\lambda}}_{\mu\nu},{S^{\lambda}}_{\mu\nu}) & \mapsto  {\Gamma^{\lambda}}_{\mu\nu}+{S^{\lambda}}_{\mu\nu}. \notag
\end{align}
An affine space does not have the structure of vector space because it lacks a notion of zero. Solving the equivalence problem reduces to determining an explicit origin for the affine space $\mathcal{D}_{0}(M,\tau,h)$. In order to do so, we are going to use the following lemma.
\begin{lemma}\label{lmm:1}
\textit{Let}
\begin{itemize}
\item \textit{$\mathcal{D}$ be an affine space modelled on a vector space $\mathfrak{V}$. The substraction map for any two elements $\nabla,\nabla'$ in $\mathcal{D}$ is denoted $\nabla-\nabla'$.}
\item \textit{$\mathfrak{W}$ be a vector space isomorphic to $\mathfrak{V}$. We denote the isomorphism $\varphi \colon \mathfrak{V}\longrightarrow \mathfrak{W}$.}
\item \textit{$\Theta$ be an affine map modelled on $\varphi$, i.e.\,a map $\Theta \colon \mathcal{D}\longrightarrow \mathfrak{W}$ satisfying $\Theta(\nabla)-\Theta(\nabla')=\varphi(\nabla-\nabla')$ for all $\nabla,\nabla'\in \mathcal{D}$.}
\end{itemize}
\textit{Then, $\Theta$ is a bijection.}
\end{lemma}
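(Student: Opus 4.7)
The plan is to verify injectivity and surjectivity separately, exploiting in each case the fact that $\varphi$ is a linear isomorphism and that $\Theta$ is, by hypothesis, ``differentiated'' by $\varphi$ in the sense that $\Theta(\nabla)-\Theta(\nabla')=\varphi(\nabla-\nabla')$.

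For injectivity, I would suppose $\Theta(\nabla)=\Theta(\nabla')$ for some $\nabla,\nabla'\in \mathcal{D}$. Then the defining property of $\Theta$ gives $\varphi(\nabla-\nabla')=0$, and since $\varphi$ is an isomorphism this forces $\nabla-\nabla'=0$ in $\mathfrak{V}$. By definition of an affine space, this means $\nabla=\nabla'$.

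For surjectivity, I would fix an arbitrary base point $\nabla_{0}\in\mathcal{D}$ (any element works, and one exists because $\mathcal{D}$ is a non-empty affine space — this non-emptiness is the implicit hypothesis; we can also simply apply the argument fiber-wise if needed). Given any target $w\in\mathfrak{W}$, set
\begin{equation*}
v:=\varphi^{-1}\bigl(w-\Theta(\nabla_{0})\bigr)\in\mathfrak{V},
\qquad
\nabla:=\nabla_{0}+v\in\mathcal{D}.
\end{equation*}
Then $\nabla-\nabla_{0}=v$, and the affine property of $\Theta$ together with linearity of $\varphi$ gives
\begin{equation*}
\Theta(\nabla)-\Theta(\nabla_{0})=\varphi(\nabla-\nabla_{0})=\varphi(v)=w-\Theta(\nabla_{0}),
\end{equation*}
so $\Theta(\nabla)=w$, establishing surjectivity.

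The argument is essentially formal and poses no real obstacle; the only subtlety worth flagging is the tacit assumption that $\mathcal{D}$ is non-empty (so that a base point $\nabla_{0}$ exists). In the intended application this will be guaranteed by the explicit construction of a compatible torsion-free connection associated with any longitudinal frame on an Augustinian $p$NC structure, so the lemma can then be used to transport that single connection into an honest bijective parametrisation of $\mathcal{D}_{0}(M,\tau,h)$ by an appropriate vector space $\mathfrak{W}$.
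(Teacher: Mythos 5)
Your proof is correct: the paper states this lemma without proof (treating it as a standard fact about affine maps modelled on linear isomorphisms), and your injectivity/surjectivity verification is exactly the standard argument one would supply. Your flag about non-emptiness of $\mathcal{D}$ is the right caveat, and it is indeed discharged in the application by the explicit construction of $\overset{\tau}{\nabla}$ in Proposition \ref{prp:9}.
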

Hence, if there is a vector space $\mathfrak{W}$ isomorphic to $\mathfrak{V}(M,\tau,h)$ and can construct an affine map $\Theta\colon \mathcal{D}_{0}(M,\tau,h)\longrightarrow \mathfrak{W}$ modelled on the isomorphism, then an origin for $\mathcal{D}_{0}(M,\tau,h)$ is uniquely determined as $\overset{0}{\nabla}=\text{Ker}\,\Theta$. If the construction of such map involves only the pieces in the structure, the origin $\overset{0}\nabla$ is canonical. This is the case of relativistic structures, where $\Theta(\nabla)=\text{Tor}(\nabla)$. However, for Leibnizian structures a field of observers $N$ must be prescribed in order to construct an affine map $\overset{N}{\Theta}$. Then, the origin $\overset{N}{\nabla}=\text{Ker}\,\overset{N}{\Theta}$ is not canonical and the field of observers $N$ is the additional structure required to fix uniquely a compatible connection \cite{XBKM}. In what follows, we are going to see that an analogous thing happens for Augustinian $p$NC structures: given a longitudinal frame (that can be thought of as the generalisation of a field of observers in Leibnizian structures) it is possible to determine uniquely a connection in $\mathcal{D}_{0}(M,\tau,h)$.

We shall begin by defining a parametrisation of $\mathfrak{V}(M,\tau,h)$. Given a longitudinal frame, $\tau_{A}$, we define the map 
\begin{align}\label{eq:69}
\overset{\tau}{\varphi}\colon \underset{p+1}{\oplus}{\Omega}^{2}(M)&\longrightarrow \mathfrak{V}(M,\tau,h),\\
{F_{A}}_{ab}&\mapsto \tau^{A}_{(b}{F}_{\vert A\vert c)d}h^{ad}, \notag
\end{align}
where $\underset{p+1}{\oplus}{\Omega}^{2}(M)$ denotes the direct sum of $p+1$ copies of the space of 2-forms $\Omega^{2}(M)$. Its elements are denoted $F_{A}$, $A$ being the index that labels each 2-form. This parametrisation has two significant advantages. First, for $p=0$ it reduces exactly to the parametrisation used in the literature to study Leibnizian structures in standard Newton--Cartan gravity \cite{XBKM}. Second, it is convenient to compare our results to those in \cite{SNC}.

The map $\overset{\tau}{\varphi}$ is linear, but it is not an isomorphism. Indeed, just by dimensional counting, one has $\text{dim}\left(\mathfrak{V}(M,\tau,h)\right)=(p+1)\left(d(d-p-1)/2\right)$, while $\text{dim}\left(\underset{p+1}{\oplus}{\Omega}^{2}(M)\right)=(p+1)\left(d(d-1)/2\right)$ (of course, we refer to the dimension of each fiber). In particular, for $p\geq0$ we have $\text{dim}\left(\underset{p+1}{\oplus}{\Omega}^{2}(M)\right)\geq\text{dim}\left(\mathfrak{V}(M,\tau,h)\right)$, and the latter inequality saturates when the former does. It follows that the kernel of $\overset{\tau}{\varphi}$ is not empty and it can be determined as shown in the following proposition.  
\begin{proposition}\label{prp:6}
\textit{Let $(M,\tau,h)$ be an Augustinian $p$NC structure, and let $\tau_{A}$ be a longitudinal frame. The kernel of $\overset{\tau}{\varphi}$ is the vector space $\overset{\tau}{\mathcal{K}}$ consisting of the polyforms $\overset{\tau}{K}_{A}\in \underset{p+1}{\oplus}\Omega^{2}(M)$ satisfying }
\begin{equation}\label{eq:73}
i)\,\,\, \overset{\tau}{K}_{A}(\tau_{B},V)=-\overset{\tau}{K}_{B}(\tau_{A},V),\,\,\,\,\,\,\,\,\, ii)\,\,\, \overset{\tau}{K}_{A}(V,W)=0,
\end{equation} 
\textit{for all transverse vector fields $V,W\in\Gamma(\text{Ker}(M,\tau))$.}
\end{proposition}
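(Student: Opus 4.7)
The plan is to reduce the claim to a pointwise check in a local Galilean $p$-frame. By Definition~\ref{def:2}, around every $q\in M$ there is a smooth Galilean $p$-frame $(\tau_{A},e_{i})$ with dual co-frame $(\tau^{A},e^{i})$; in this basis $\tau^{A}$ vanishes on $e_{i}$, $h^{ab}=\delta^{ij}e^{a}_{i}e^{b}_{j}$, and in particular $h^{ab}\tau^{A}_{b}=0$, which follows from the orthogonality $h^{ab}\tau_{bc}=0$ together with the pointwise linear independence of the $\tau^{B}_{c}$. Any $\overset{\tau}{K}_{A}\in\underset{p+1}{\oplus}\Omega^{2}(M)$ then splits into the three blocks $\overset{\tau}{K}_{A\,BC}$, $\overset{\tau}{K}_{A\,Bj}$, and $\overset{\tau}{K}_{A\,jk}$.

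Next I would expand $S^{a}{}_{bc}:=\overset{\tau}{\varphi}(\overset{\tau}{K})^{a}{}_{bc}=\tau^{A}_{(b}\overset{\tau}{K}_{|A|\,c)d}h^{ad}$ block by block. Two cases vanish for free: the longitudinal upper-index piece $S^{A}{}_{bc}$ is zero because $h^{ad}\tau^{A}_{a}=0$, and the purely transverse piece $S^{i}{}_{jk}$ is zero because the factor $\tau^{A}_{(b}$ annihilates any pair of transverse lower indices. The only surviving components are $S^{i}{}_{BC}$ and $S^{i}{}_{Bk}$, and a short bookkeeping computation gives
\begin{equation*}
S^{i}{}_{BC}=\tfrac{1}{2}\,\delta^{ij}\bigl(\overset{\tau}{K}_{B\,Cj}+\overset{\tau}{K}_{C\,Bj}\bigr),\qquad S^{i}{}_{Bk}=\tfrac{1}{2}\,\delta^{ij}\,\overset{\tau}{K}_{B\,kj}.
\end{equation*}

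Invertibility of $\delta^{ij}$ then shows that $S\equiv 0$ is equivalent to $\overset{\tau}{K}_{A\,Bj}+\overset{\tau}{K}_{B\,Aj}=0$ and $\overset{\tau}{K}_{A\,jk}=0$ for all longitudinal $A,B$ and transverse $j,k$. Evaluating these on arbitrary transverse $V,W\in\Gamma(\text{Ker}(M,\tau))$ yields precisely conditions (i) and (ii) of the proposition. Since (i) and (ii) are themselves frame-invariant, the characterisation passes from the chosen local Galilean $p$-frame to the whole trivialising open set, and hence to all of $M$. The argument is essentially linear algebra in the fibres; the only point requiring care is to verify the two automatic vanishings $S^{A}{}_{bc}=0$ and $S^{i}{}_{jk}=0$, which ensure that (i) and (ii) truly exhaust the kernel.
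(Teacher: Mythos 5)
Your proof is correct and follows essentially the same route as the paper's: both reduce the kernel condition to a block decomposition of $\overset{\tau}{K}_{A}$ with respect to the longitudinal/transverse splitting of $TM$ induced by $\tau_{A}$, and both identify the mixed block constraint $i)$ and the transverse block constraint $ii)$ as exactly what the vanishing of $\tau^{A}_{(b}\overset{\tau}{K}_{\vert A\vert c)d}h^{ad}$ encodes. The only difference is presentational --- the paper obtains necessity via an invariant contraction with $\tau_{B}^{b}$ and $\overset{\tau}{h}_{ae}$ and invokes a Galilean $p$-frame only for sufficiency, whereas you carry out both directions uniformly in the frame; the content is identical.
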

\begin{proof}
First, contracting equation
\begin{equation}\label{eq:AA1}
\tau^{A}_{(b}\overset{\tau}{K}_{\vert A\vert c)d}h^{ad}=0,
\end{equation}
with $\tau_{B}^{b}$ and $\overset{\tau}{h}_{ae}$ and re-organising terms conveniently, one has that the elements $\overset{\tau}{K}_{A}\in \text{Ker}\overset{\tau}{\varphi}$ are those satisfying
\begin{equation}\label{eq:AA2}
\overset{\tau}{K}_{A}(X,Y)=\overset{\tau}{K}_{A}(X,\tau_{B})\tau^{B}(Y)-\overset{\tau}{K}_{B}(\tau_{A},Y)\tau^{B}(X)+\overset{\tau}{K}_{B}(\tau_{A},\tau_{C})\tau^{B}(X)\tau^{C}(Y).
\end{equation}
Now we have to check that equations \eqref{eq:73} are equivalent to the condition $\overset{\tau}{K}_{A}\in\text{Ker}\overset{\tau}{\varphi}$. It is clear from \eqref{eq:AA2} that if $\overset{\tau}{K}_{A}\in\text{Ker}\overset{\tau}{\varphi}$ then equations \eqref{eq:73} are satisfied. At the same time, assuming that the pair of conditions in \eqref{eq:73} hold, it is easy to check, by working on a Galilean $p$-frame, that equation \eqref{eq:AA1} holds. Thus, $\overset{\tau}{K}_{A}\in\text{Ker}\overset{\tau}{\varphi}$.
\end{proof}

We shall view the vector space $\overset{\tau}{\mathcal{K}}$ as a 'gauge group' acting on $\underset{p+1}{\oplus}{\Omega}^{2}(M)$ as
\begin{align}\label{eq:add1}
\underset{p+1}{\oplus}{\Omega}^{2}(M)\times\overset{\tau}{\mathcal{K}}&\longrightarrow \underset{p+1}{\oplus}{\Omega}^{2}(M),\\
({F}_{A},\overset{\tau}{K}_{A})&\mapsto {F}_{A}+\overset{\tau}{K}_{A}. \notag
\end{align} 
Any two ${F}_{A},{F'}_{A}\in\underset{p+1}{\oplus}{\Omega}^{2}(M)$ related by a $\overset{\tau}{\mathcal{K}}$-transformation \eqref{eq:add1} map via \eqref{eq:69} to the same element. This gauge ambiguity can be removed by considering the space of $\overset{\tau}{\mathcal{K}}$-orbits in $\underset{p+1}{\oplus}{\Omega}^{2}(M)$. In fact, we are going to see that such a space is isomorphic to $\mathfrak{V}(M,\tau,h)$.

\begin{definition}\label{def:9}
\textit{Let $(M,\tau,h)$ be an Augustinian $p$NC structure, and let $\tau_{A}$ be a longitudinal frame. A $\overset{\tau}{\mathcal{K}}$-orbit in $\underset{p+1}{\oplus}{\Omega}^{2}(M)$ is dubbed a gravitational field strength with respect to $\tau_{A}$. The vector space of gravitational field strengths with respect to $\tau_{A}$ is denoted}
\begin{equation}\label{eq:74}
\overset{\tau}{\mathcal{F}}:=\underset{p+1}{\oplus}{\Omega}^{2}(M)/\overset{\tau}{\mathcal{K}}.
\end{equation}
\end{definition}

Notice that, indeed, $\overset{\tau}{\mathcal{F}}$ has the structure of vector space, as it is nothing but the quotient of a vector space by a vector subspace. We shall denote $[F_{A}]_{\tau}$ the elements in $\overset{\tau}{\mathcal{F}}$, i.e.\,$[F_{A}]_{\tau}$ is the set of elements in $\underset{p+1}{\oplus}\Omega^{2}(M)$ that belong to the $\overset{\tau}{\mathcal{K}}$-orbit through $F_{A}$. Conversely, we will use the notation $\overset{\tau}{F}_{A}$ to denote a generic representative of $[F_{A}]_{\tau}$ in $\underset{p+1}{\oplus}\Omega^{2}(M)$. The terminology used in this definition will be justified later on. 

\begin{proposition}\label{prp:7}
\textit{Let $(M,\tau,h)$ be an Augustinian $p$NC structure and let $\tau_{A}$ be a longitudinal frame. The space $\overset{\tau}{\mathcal{F}}$ of gravitational field strengths with respect to $\tau_{A}$ is isomorphic to the space $\mathfrak{V}(M,\tau,h)$.}
\end{proposition}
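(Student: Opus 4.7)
The plan is to apply the first isomorphism theorem for linear maps. The map $\overset{\tau}{\varphi}$ is $C^{\infty}(M)$-linear from $\underset{p+1}{\oplus}\Omega^{2}(M)$ to $\mathfrak{V}(M,\tau,h)$, so there is a canonical linear isomorphism
\[
\underset{p+1}{\oplus}\Omega^{2}(M)/\text{Ker}(\overset{\tau}{\varphi}) \;\cong\; \text{Image}(\overset{\tau}{\varphi}).
\]
By Proposition \ref{prp:6}, $\text{Ker}(\overset{\tau}{\varphi})=\overset{\tau}{\mathcal{K}}$, so the left-hand side coincides with $\overset{\tau}{\mathcal{F}}$ by Definition \ref{def:9}. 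Hence the proposition reduces to showing that $\overset{\tau}{\varphi}$ is surjective onto $\mathfrak{V}(M,\tau,h)$.

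I would establish surjectivity by a fiberwise dimension count. Fix $q\in M$. By Proposition \ref{prp:6}, an element $K_A\in\overset{\tau}{\mathcal{K}}_q$ is fully determined by the components $K_A(\tau_B,\tau_C)$, automatically antisymmetric in $(B,C)$ and contributing $(p+1)^{2}p/2$ independent entries, together with the components $K_A(\tau_B,V)$ for $V\in\text{Ker}(q,\tau)$, which by condition $(i)$ are antisymmetric in $(A,B)$ and thus contribute $(p+1)p(d-p-1)/2$ entries; condition $(ii)$ kills the purely transverse block. Summing gives $\dim\overset{\tau}{\mathcal{K}}_q=(p+1)pd/2$. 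Combined with $\dim\underset{p+1}{\oplus}\Omega^{2}(M)_q=(p+1)d(d-1)/2$, recorded in the paragraph before Proposition \ref{prp:6}, the rank--nullity theorem yields $\dim\text{Image}(\overset{\tau}{\varphi})_q=(p+1)d(d-p-1)/2$, which coincides with $\dim\mathfrak{V}(M,\tau,h)_q$ stated there. Since $\text{Image}(\overset{\tau}{\varphi})_q$ is a linear subspace of $\mathfrak{V}(M,\tau,h)_q$ of the same finite dimension, the two coincide pointwise.

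The pointwise surjectivity then lifts to surjectivity on smooth global sections by standard bundle-theoretic reasoning: $\overset{\tau}{\varphi}$ is induced by a smooth vector bundle morphism whose kernel $\overset{\tau}{\mathcal{K}}$ has constant fiber rank, hence is a smooth subbundle admitting a smooth complement by paracompactness of $M$. A smooth right inverse so obtained allows any $S\in\mathfrak{V}(M,\tau,h)$ to be lifted to some $F\in\underset{p+1}{\oplus}\Omega^{2}(M)$ with $\overset{\tau}{\varphi}(F)=S$, completing the identification $\overset{\tau}{\mathcal{F}}\cong\mathfrak{V}(M,\tau,h)$. The main obstacle is the dimension tally for $\overset{\tau}{\mathcal{K}}_q$: both clauses of Proposition \ref{prp:6} play an essential role, condition $(ii)$ eliminating the transverse-transverse block and condition $(i)$ imposing the antisymmetry in the longitudinal labels that makes the complementary dimension land exactly on $\mathfrak{V}(M,\tau,h)$. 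An alternative route would be to invert $\overset{\tau}{\varphi}$ explicitly by contracting $S$ with $\tau_A$, $\tau^A$, $h^{ab}$ and $\overset{\tau}{h}_{ab}$ to reconstruct each $F_A$, but this is combinatorially heavier and yields no extra structural insight once the kernel is in hand.
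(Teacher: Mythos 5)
Your argument is correct, but it takes a genuinely different route from the paper. The paper proves the proposition by exhibiting an explicit inverse, $\overset{\tau}{\overline{\varphi}}\tensor{\vphantom{\overline{\varphi}}}{^{-1}}({S^{a}}_{bc})=[2\overset{\tau}{h}_{c[b}{S^{c}}_{a]d}\tau^{d}_{A}]_{\tau}$, and checking by an index computation that the discrepancy $\overset{\tau}{K}_{A}$ between $\overset{\tau}{\overline{\varphi}}\tensor{\vphantom{\overline{\varphi}}}{^{-1}}\circ\overset{\tau}{\overline{\varphi}}([F_{A}]_{\tau})$ and $[F_{A}]_{\tau}$ satisfies the conditions of Proposition \ref{prp:6} and hence vanishes in the quotient; you instead invoke the first isomorphism theorem and establish surjectivity by a fibrewise dimension count, and your tally is right: $\dim\overset{\tau}{\mathcal{K}}_{q}=(p+1)^{2}p/2+(p+1)p(d-p-1)/2=(p+1)pd/2$, so rank--nullity gives $(p+1)d(d-p-1)/2=\dim\mathfrak{V}(M,\tau,h)_{q}$, and your remark on lifting pointwise surjectivity to sections via a smooth splitting is the standard and adequate way to close that gap. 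Your route is cleaner and avoids the index gymnastics, at the cost of relying on the (unproven but easily checked) dimension formula for $\mathfrak{V}(M,\tau,h)$ quoted before Proposition \ref{prp:6}, and of implicitly trusting that $\overset{\tau}{\varphi}$ really lands in $\mathfrak{V}(M,\tau,h)$ (the paper also asserts this without proof in \eqref{eq:69}, so you are no worse off). One caveat on your closing remark: the explicit contraction formula you dismiss as yielding ``no extra structural insight'' is precisely what the paper extracts from its proof and reuses in \eqref{eq:78} and Definition \ref{def:10} to define the gravitational field strength induced by a connection, and hence the torsion-free special connection of Proposition \ref{prp:9}; your abstract argument establishes the isomorphism but would leave that later construction without its key ingredient.
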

\begin{proof}
Consider the linear map 
\begin{align}\label{eq:75}
\overset{\tau}{\overline{\varphi}}\colon \overset{\tau}{\mathcal{F}}&\longrightarrow \mathfrak{V}(M,\tau,h),\\
[{F}_{A}]_{\tau}&\mapsto \tau^{A}_{(b}\overset{\tau}{F}_{\vert A\vert c)d}h^{ad}, \notag
\end{align}
where $\overset{\tau}{F}_{A}$ is any representative of $[{F}_{A}]_{\tau}\in \overset{\tau}{\mathcal{F}}$ (by construction, there is no dependence on the representative chosen). We are going to show that $\overset{\tau}{\overline{\varphi}}$ admits an inverse, given by
\begin{align}\label{eq:76}
\overset{\tau}{\overline{\varphi}}\tensor{\vphantom{\overline{\varphi}}}{^{-1}}\colon \mathfrak{V}(M,\tau,h)&\longrightarrow  \overset{\tau}{\mathcal{F}},\\
{S^{a}}_{bc}&\mapsto  [{F}_{Aab}]_{\tau}=[2\overset{\tau}{h}_{c[b}{S^{c}}_{a]d}\tau^{d}_{A}]_{\tau}. \notag
\end{align}
Indeed, if 
\begin{equation}\label{eq:E77}
{S^{a}}_{bc}=\tau^{A}_{(b}\overset{\tau}{F}_{\vert A\vert c)d}h^{ad},
\end{equation}
then
\begin{equation}\label{eq:E78}
2\overset{\tau}{h}_{c[b}{S^{c}}_{a]d}\tau^{d}_{A}=\overset{\tau}{F}_{Aab}+\overset{\tau}{K}_{Aab},
\end{equation}
where 
\begin{equation}\label{eq:E79}
\overset{\tau}{K}_{Aab}=\tau^{c}_{A}\overset{\tau}{F}_{Bc[b}\tau^{B}_{a]}-\tau^{B}_{(a}\overset{\tau}{F}_{\vert B\vert c)e}\tau^{c}_{A}\tau^{e}_{C}\tau^{C}_{b}+\tau^{B}_{(b}\overset{\tau}{F}_{\vert B\vert c)e}\tau^{c}_{A}\tau^{e}_{C}\tau^{C}_{a}.
\end{equation}
However, it is easy to check that
\begin{equation}\label{eq:E80}
\overset{\tau}{K}_{A}(V,W)=0,\,\,\,\,\,\, \overset{\tau}{K}_{A}(V,\tau_{B})=\frac{1}{2}(\overset{\tau}{F}_{B}(\tau_{A},V)-\overset{\tau}{F}_{A}(\tau_{B},V))=-\overset{\tau}{K}_{B}(V,\tau_{A}),
\end{equation}
for all transverse vector fields $V,W$. Thus, $\overset{\tau}{K}_{A}\in\text{Ker}\,\overset{\tau}{\varphi}$ so $[\overset{\tau}{K}_{A}]_{\tau}=0$. This allows us to write
\begin{equation}\label{eq:E81}
\overset{\tau}{\overline{\varphi}}\tensor{\vphantom{\overline{\varphi}}}{^{-1}}\circ\overset{\tau}{\overline{\varphi}}\left([F_{A}]_{\tau}\right)=\overset{\tau}{\overline{\varphi}}\tensor{\vphantom{\overline{\varphi}}}{^{-1}}\left(\tau^{A}_{(b}\overset{\tau}{F}_{\vert A\vert c)d}h^{ad}\right)=[\overset{\tau}{F}_{A}+\overset{\tau}{K}_{A}]_{\tau}=[{F}_{A}]_{\tau}.
\end{equation}
\end{proof}
This, together with Proposition \ref{prp:5}, allow us to characterise $\mathcal{D}_{0}(M,\tau,h)$ as an affine space modelled on $\overset{\tau}{\mathcal{F}}$. For clarity, we shall state it as a proposition.

\begin{proposition}\label{prp:8}
\textit{Let $(M,\tau,h)$ be an Augustinian $p$NC structure, and let $\tau_{A}$ be a longitudinal frame. Then, the space $\mathcal{D}_{0}(M,\tau,h)$ of torsion-free connections compatible with the Augustinian $p$NC structure is an affine space modelled on the vector space $\overset{\tau}{\mathcal{F}}$ of gravitational field strenghts with respect to $\tau_{A}$.} 
\end{proposition}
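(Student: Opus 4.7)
The plan is to leverage the two preceding propositions to transport the affine structure on $\mathcal{D}_0(M,\tau,h)$ from $\mathfrak{V}(M,\tau,h)$ to $\overset{\tau}{\mathcal{F}}$. Proposition \ref{prp:5} asserts that $\mathcal{D}_0(M,\tau,h)$ is a (non-empty) affine space modelled on $\mathfrak{V}(M,\tau,h)$, with free and transitive translation action $(\nabla,S) \mapsto \nabla + S$. Proposition \ref{prp:7} provides the linear isomorphism $\overset{\tau}{\overline{\varphi}}\colon \overset{\tau}{\mathcal{F}} \to \mathfrak{V}(M,\tau,h)$. Combining these, the natural candidate for the action of $\overset{\tau}{\mathcal{F}}$ on $\mathcal{D}_0(M,\tau,h)$ is the composite
\begin{equation*}
(\nabla,[F_A]_{\tau}) \;\longmapsto\; \nabla + \overset{\tau}{\overline{\varphi}}([F_A]_{\tau}).
\end{equation*}

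I would then verify the two affine axioms. The identity element $[0]_{\tau} \in \overset{\tau}{\mathcal{F}}$ acts trivially because $\overset{\tau}{\overline{\varphi}}(0) = 0$ by linearity. Compatibility with addition,
\begin{equation*}
(\nabla + [F_A]_{\tau}) + [F'_A]_{\tau} \;=\; \nabla + \bigl([F_A]_{\tau} + [F'_A]_{\tau}\bigr),
\end{equation*}
follows from linearity of $\overset{\tau}{\overline{\varphi}}$ together with the corresponding property on $\mathfrak{V}(M,\tau,h)$ guaranteed by Proposition \ref{prp:5}. Finally, freeness and transitivity of the new action are inherited from those of the $\mathfrak{V}(M,\tau,h)$-action precisely because $\overset{\tau}{\overline{\varphi}}$ is a bijection: any two connections in $\mathcal{D}_0(M,\tau,h)$ differ by a unique $S\in\mathfrak{V}(M,\tau,h)$, which corresponds to a unique $[F_A]_{\tau}\in\overset{\tau}{\mathcal{F}}$ under $\overset{\tau}{\overline{\varphi}}^{-1}$.

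In short, no genuine obstacle remains at this stage: Proposition \ref{prp:8} is a formal corollary of the general fact that an affine structure transports along any linear isomorphism of its translation space. The substantive work has already been carried out in Propositions \ref{prp:5} and \ref{prp:7}, where the vector space $\mathfrak{V}(M,\tau,h)$ was identified and its non-trivial parametrisation by the quotient $\overset{\tau}{\mathcal{F}} = \underset{p+1}{\oplus}\Omega^{2}(M)/\overset{\tau}{\mathcal{K}}$ established. The content of Proposition \ref{prp:8} is therefore primarily conceptual: it packages the torsion-free compatible connections as an affine space over gravitational field strengths, which is the parametrisation directly comparable with previous work for $p=0$ in \cite{XBKM} and for $p=1$ in \cite{SNC}, and which sets the stage for fixing a canonical origin once an additional piece of data (a longitudinal frame, in the role played by a field of observers in the Leibnizian case) is prescribed.
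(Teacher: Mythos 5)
Your proposal is correct and matches the paper exactly: the paper states Proposition \ref{prp:8} as an immediate consequence of Propositions \ref{prp:5} and \ref{prp:7}, with the action \eqref{eq:77} being precisely your composite $\nabla \mapsto \nabla + \overset{\tau}{\overline{\varphi}}([F_A]_{\tau})$. The only difference is that you spell out the verification of the affine axioms, which the paper leaves implicit.
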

The action of $\overset{\tau}{\mathcal{F}}$ on $\mathcal{D}_{0}(M,\tau,h)$ is given by
\begin{align}\label{eq:77}
\mathcal{D}_{0}(M,\tau,h)\times \overset{\tau}{\mathcal{F}}&\longrightarrow \mathcal{D}_{0}(M,\tau,h),\\
\left({\Gamma^{\lambda}}_{\mu\nu},[{F}_{A\mu\nu}]_{\tau}\right)&\mapsto {\Gamma^{\lambda}}_{\mu\nu}+\tau^{A}_{(\mu}\overset{\tau}{F}_{\vert A\vert \nu)\rho}h^{\lambda\rho}. \notag
\end{align}
We argued above that an origin for $\mathcal{D}_{0}(M,\tau,h)$ can be obtained by constructing a suitable affine map (see Lemma \ref{lmm:1}). A natural choice of such a map follows from the fact that
\begin{equation}\label{eq:78}
\overset{\tau}{\overline{\varphi}}\tensor{\vphantom{\overline{\varphi}}}{^{-1}}(\nabla-\nabla')_{Aab}=[2\overset{\tau}{h}_{c[b}{\nabla}_{a]}\tau^{c}_{A}-2\overset{\tau}{h}_{c[b}{\nabla'}_{a]}\tau^{c}_{A}]_{\tau}=[2\overset{\tau}{h}_{c[b}{\nabla}_{a]}\tau^{c}_{A}]_{\tau}-[2\overset{\tau}{h}_{c[b}{\nabla'}_{a]}\tau^{c}_{A}]_{\tau}.
\end{equation}

\begin{definition}\label{def:10}
\textit{Let $(M,\tau,h)$ be an Augustinian $p$NC structure, and let $\tau_{A}$ be a longitudinal frame. We define the map}
\begin{align}\label{eq:80}
\overset{\tau}\Theta \colon \mathcal{D}_{0}(M,\tau,h)&\longrightarrow \overset{\tau}{\mathcal{F}},\\
\nabla&\mapsto [\overset{\tau}{F}(\nabla)_{A}]_{\tau}, \notag
\end{align}
\textit{where}
\begin{equation}\label{eq:fieldstreght}
\overset{\tau}{F}(\nabla)_{Aab}:=2\overset{\tau}{h}_{c[b}{\nabla}_{a]}\tau^{c}_{A}.
\end{equation}
\textit{We refer to $[\overset{\tau}{F}(\nabla)_{A}]_{\tau}$ as the gravitational field strength induced by $\nabla$ with respect to $\tau_{A}$.}  
\end{definition}
From Lemma \ref{lmm:1} it follows that $\overset{\tau}\Theta$ is a bijection and therefore we can provide $\mathcal{D}_{0}(M,\tau,h)$ with an origin consisting of $\overset{\tau}{\nabla}=\text{Ker}\,{\overset{\tau}\Theta}$. This result solves the equivalence problem for Augustinian $p$NC structures.
\begin{proposition}\label{prp:9}
\textit{Let $\mathcal{D}_{0}(M,\tau,h)$ be the space of torsion-free connections compatible with an Augustinian $p$NC structure $(M,\tau,h)$. Given a longitudinal frame $\tau_{A}$, there exists a unique torsion-free, compatible connection $\overset{\tau}{\nabla}\in\mathcal{D}_{0}(M,\tau,h)$ such that the gravitational field strength induced by $\overset{\tau}\nabla$ with respect to $\tau_{A}$ vanishes. We refer to $\overset{\tau}{\nabla}$ as \textit{the torsion-free special connection associated to $\tau_{A}$}. Furthermore, the connection components of $\overset{\tau}{\nabla}$ in a general coordinate chart are given by}
\begin{equation}\label{eq:81}
\overset{\tau}{\Gamma}\tensor{\vphantom{\Gamma}}{^\lambda_\mu_\nu} =\frac{1}{2}\overset{\tau}{\tau}\tensor{\vphantom{\tau}}{^\lambda^\gamma}\left(\partial_{\mu}\tau_{\nu\gamma}+\partial_{\nu}\tau_{\mu\gamma}-\partial_{\gamma}\tau_{\mu\nu}\right)+\frac{1}{2}h^{\lambda\gamma}\left(\partial_{\mu}\overset{\tau}{h}_{\nu\gamma}+\partial_{\nu}\overset{\tau}{h}_{\mu\gamma}-\partial_{\gamma}\overset{\tau}{h}_{\mu\nu}\right).
\end{equation}
\end{proposition}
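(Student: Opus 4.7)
The plan is to combine an abstract uniqueness argument with an explicit verification that the stated formula works. First I would invoke Lemma \ref{lmm:1} applied to the map $\overset{\tau}{\Theta}$ of Definition \ref{def:10}, taking $\mathcal{D}=\mathcal{D}_{0}(M,\tau,h)$ (which by Proposition \ref{prp:5} is affine modelled on $\mathfrak{V}(M,\tau,h)$), $\mathfrak{W}=\overset{\tau}{\mathcal{F}}$, and $\varphi=\overset{\tau}{\overline{\varphi}}^{-1}$, the isomorphism furnished by Proposition \ref{prp:7}. The affine-compatibility hypothesis of Lemma \ref{lmm:1} is exactly the content of equation \eqref{eq:78}. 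Hence $\overset{\tau}{\Theta}$ is a bijection and its kernel contains at most one element; what remains is to exhibit an element of this kernel and to identify it with the right-hand side of \eqref{eq:81}.

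For existence, I would take \eqref{eq:81} as the \emph{definition} of a candidate set of coefficients $\overset{\tau}{\Gamma}^{\lambda}_{\mu\nu}$ and verify the required properties directly. That the right-hand side transforms as an affine connection under a change of chart follows from the standard Koszul-type computation, since the two symmetric brackets are built from differences of partial derivatives of symmetric tensors. Torsion-freeness is manifest from the symmetry in $\mu,\nu$. For $\overset{\tau}{\nabla}\tau=0$, substituting \eqref{eq:81} into $\partial_{\mu}\tau_{\nu\rho}-\overset{\tau}{\Gamma}^{\lambda}_{\nu\mu}\tau_{\lambda\rho}-\overset{\tau}{\Gamma}^{\lambda}_{\rho\mu}\tau_{\nu\lambda}$ kills the $h^{\lambda\gamma}$-term by $h^{\lambda\gamma}\tau_{\lambda\rho}=0$ and reduces the $\overset{\tau}{\tau}^{\lambda\gamma}$-term to the usual Koszul cyclic cancellation after using the identity $\overset{\tau}{\tau}^{\lambda\gamma}\tau_{\lambda\rho}=\tau_{A}^{\gamma}\tau^{A}_{\rho}$ (which follows from \eqref{eq:51}). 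The verification of $\overset{\tau}{\nabla}h=0$ is analogous, now using the dual orthogonality $\overset{\tau}{\tau}^{\lambda\gamma}\overset{\tau}{h}_{\gamma\rho}=0$ to eliminate the $\overset{\tau}{\tau}$-piece and the Koszul argument on the remaining $h$-part. At one intermediate step the Augustinian condition \eqref{eq:63} (in particular part $ii)$) is needed to eliminate a residual longitudinal contribution that would otherwise obstruct compatibility; this is precisely where the hypothesis that $(M,\tau,h)$ is Augustinian (and not merely Aristotelian) is used.

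Finally I would compute the induced gravitational field strength $\overset{\tau}{F}(\overset{\tau}{\nabla})_{Aab}=2\overset{\tau}{h}_{c[b}\overset{\tau}{\nabla}_{a]}\tau^{c}_{A}$ of the candidate connection and show that its equivalence class in $\overset{\tau}{\mathcal{F}}$ vanishes. By Proposition \ref{prp:6} this reduces to the two identities $\overset{\tau}{F}(\overset{\tau}{\nabla})_{A}(V,W)=0$ for every pair of transverse vector fields $V,W$, and antisymmetry of $\overset{\tau}{F}(\overset{\tau}{\nabla})_{A}(\tau_{B},\cdot)$ in $A,B$. Both follow after substituting \eqref{eq:81} into the definition of $\overset{\tau}{F}$ and exploiting the orthogonality relations of Definition \ref{def:6} together with the Augustinian condition \eqref{eq:63}. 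Once $\overset{\tau}{F}(\overset{\tau}{\nabla})_{A}\in\overset{\tau}{\mathcal{K}}$, the class $[\overset{\tau}{F}(\overset{\tau}{\nabla})_{A}]_{\tau}$ is the zero element of $\overset{\tau}{\mathcal{F}}$, so the candidate lies in $\operatorname{Ker}\overset{\tau}{\Theta}$; uniqueness from the first paragraph then identifies it with the torsion-free special connection associated to $\tau_{A}$. The hard part will be the middle paragraph: the compatibility computations are a delicate bookkeeping exercise in which one must separately track longitudinal and transverse contributions, invoking the Augustinian hypothesis at the precisely correct step. The abstract uniqueness argument and the final field-strength check are comparatively short.
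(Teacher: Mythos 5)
Your proposal is correct, but it establishes the formula \eqref{eq:81} by the converse route to the one the paper takes. Both arguments share the first step: uniqueness follows from Lemma \ref{lmm:1} applied to $\overset{\tau}{\Theta}$, exactly as you set it up. For the formula itself, however, the paper does not verify that \eqref{eq:81} works; it \emph{derives} it. Starting from the assumption that $\overset{\tau}{\nabla}\in\operatorname{Ker}\overset{\tau}{\Theta}$, it feeds the kernel characterisation of Proposition \ref{prp:6} into the definition \eqref{eq:fieldstreght}, and, using compatibility and torsion-freeness, obtains two Koszul-type identities for $\overset{\tau}{h}(\overset{\tau}{\nabla}_{X}Y,Z)$ and $\tau(\overset{\tau}{\nabla}_{X}Y,Z)$ whose coordinate expressions, contracted with $h$ and $\overset{\tau}{\tau}$, yield \eqref{eq:81}. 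That argument is shorter and explains where the formula comes from, but strictly speaking it only shows that \emph{if} a connection lies in the kernel then its components must be \eqref{eq:81}; existence is delegated to the (unproved) assertion of Proposition \ref{prp:5} that $\mathcal{D}_{0}(M,\tau,h)$ is a non-empty affine space. Your direct verification --- checking torsion-freeness, $\overset{\tau}{\nabla}\tau=0$, $\overset{\tau}{\nabla}h=0$ and membership of the induced field strength in $\overset{\tau}{\mathcal{K}}$ --- is the longer computation you anticipate, but it buys an explicit existence proof and makes visible exactly where the Augustinian conditions \eqref{eq:63} enter (they are indeed indispensable: without them the residual terms proportional to $\overset{\tau}{P}\tensor{\vphantom{P}}{^\gamma_\rho}\bigl(\partial_{\mu}\tau_{\nu\gamma}+\partial_{\nu}\tau_{\mu\gamma}-\partial_{\gamma}\tau_{\mu\nu}\bigr)$ in $\overset{\tau}{\nabla}_{\mu}\tau_{\nu\rho}$ do not cancel). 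The two proofs are complementary, and yours closes a small logical gap the paper leaves open.
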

\begin{proof}
Uniqueness is already proven, so it only remains to be shown that in a general coordinate chart the connection components are those in \eqref{eq:81}. Let $X,Y,Z$ be any triplet of vector fields. Since $[\overset{\tau}{F}(\overset{\tau}\nabla)_{A}]_{\tau}=0$ we know from the proof of Proposition \ref{prp:6} that $\overset{\tau}{F}(\overset{\tau}\nabla)_{A}$ satisfies
\begin{equation}\label{eq:EE1}
\overset{\tau}{F}(\overset{\tau}\nabla)_{A}(X,Y)=\overset{\tau}{F}(\overset{\tau}\nabla)_{A}(X,\tau_{B})\tau^{B}(Y)-\overset{\tau}{F}(\overset{\tau}\nabla)_{B}(\tau_{A},Y)\tau^{B}(X)+\overset{\tau}{F}(\overset{\tau}\nabla)_{B}(\tau_{A},\tau_{C})\tau^{B}(X)\tau^{C}(Y).
\end{equation}
At the same time, from the definition of $\overset{\tau}{F}(\overset{\tau}\nabla)$ in \eqref{eq:fieldstreght}, we have
\begin{equation}\label{eq:EE2}
\overset{\tau}{F}(\overset{\tau}\nabla)_{A}(X,Y)=\overset{\tau}{h}(\overset{\tau}{\nabla}_{X}\tau_{A},Y)-\overset{\tau}{h}(\overset{\tau}{\nabla}_{Y}\tau_{A},X).  
\end{equation}
Combining these two equations and using compatibility and torsion-freeness of $\overset{\tau}{\nabla}$, one gets after some manipulations the couple of equations
\begin{align}\label{eq:EE3}
2\overset{\tau}{h}(\overset{\tau}{\nabla}_{X}Y,Z)=&X[\overset{\tau}{h}(Y,Z)]+Y[\overset{\tau}{h}(Z,X)]-Z[\overset{\tau}{h}(X,Y)]\\ \notag
&+\overset{\tau}{h}([X,Y],Z)+\overset{\tau}{h}([Z,X],Y)-\overset{\tau}{h}([Y,Z],X)\\ \notag
&+2\tau^{A}(Z)\left[\overset{\tau}{h}(Y,\overset{\tau}{\nabla}_{X}\tau_{A})-\tau^{B}(Y)\overset{\tau}{h}(X,\overset{\tau}{\nabla}_{\tau_{A}}\tau_{B})\right], \notag
\end{align}
and
\begin{align}\label{eq:EE4}
2\tau(\overset{\tau}{\nabla}_{X}Y,Z)=&X[\tau(Y,Z)]+Y[\tau(Z,X)]-Z[\tau(X,Y)]\\ \notag
&+\tau([X,Y],Z)+\tau([Z,X],Y)-\tau([Y,Z],X). \notag
\end{align}
Evaluating the second one in a coordinate basis
\begin{equation}\label{eq:EE5}
X=\partial_{\alpha},\,\,\,\,\,\, Y=\partial_{\beta},\,\,\,\,\,\, Z=\partial_{\gamma},
\end{equation}
and contracting with $\overset{\tau}{\tau}\tensor{\vphantom{\tau}}{^\gamma^\rho}$, one gets
\begin{equation}\label{eq:EE6}
\overset{\tau}{\Gamma}\tensor{\vphantom{\Gamma}}{^\rho_\beta_\alpha}=\frac{1}{2}\overset{\tau}{\tau}\tensor{\vphantom{\tau}}{^\gamma^\rho}\left(\partial_{\alpha}\tau_{\beta\gamma}+\partial_{\beta}\tau_{\alpha\gamma}-\partial_{\gamma}\tau_{\alpha\beta}\right)+h^{\rho\gamma}\overset{\tau}{h}_{\gamma\mu}  \overset{\tau}{\Gamma}\tensor{\vphantom{\Gamma}}{^\mu_\beta_\alpha}.
\end{equation}
Evaluating \eqref{eq:EE3} in the basis \eqref{eq:EE5} and contracting with $h^{\gamma\rho}$, gives
\begin{equation}\label{eq:EE7}
h^{\rho\gamma}\overset{\tau}{h}_{\gamma\mu} \overset{\tau}{\Gamma}\tensor{\vphantom{\Gamma}}{^\mu_\beta_\alpha}=\frac{1}{2}h^{\rho\gamma}\left(\partial_{\alpha}\overset{\tau}{h}_{\beta\gamma}+\partial_{\beta}\overset{\tau}{h}_{\alpha\gamma}-\partial_{\gamma}\overset{\tau}{h}_{\alpha\beta}\right).
\end{equation}
Plugging \eqref{eq:EE7} into \eqref{eq:EE6}, the result follows.
\end{proof}

Our torsion-free special connections reduce, when $p=0$, to the torsion-free special connections of Leibnizian structures \cite{Torfreespecial1,Torfreespecial2} and, thus, can be thought of as their generalisation. Given a longitudinal frame $\tau_{A}$, we can take the associated $\overset{\tau}{\nabla}$ as the origin of $\mathcal{D}_{0}(M,\tau,h)$ and, consequently, it acquires the structure of vector space. In this sense, one can think of $\overset{\tau}{\nabla}$ as the analogue of the Levi--Civita connection of relativistic structures. Actually, for $p=d-1$, $\overset{\tau}{\nabla}$ is nothing but the Levi--Civita connection of $\tau_{ab}$.

For the sake of clarity, we summarise all the results provided so far in the form of a theorem, as follows.

\begin{theorem}\label{th:1}
\textit{Let $\mathcal{D}_{0}(M,\tau,h)$ be the space of torsion-free connections compatible with an Augustinian $p$NC structure $(M,\tau,h)$. Given a longitudinal frame $\tau_{A}$, $\mathcal{D}_{0}(M,\tau,h)$ has the structure of vector space, the origin of which is the torsion-free special connection $\overset{\tau}{\nabla}$, and $\mathcal{D}_{0}(M,\tau,h)$ is then naturally isomorphic to the vector space $\overset{\tau}{\mathcal{F}}$ of gravitational field strengths with respect to $\tau_{A}$.}
\end{theorem}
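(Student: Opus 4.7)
The plan is to recognise that Theorem \ref{th:1} is a synthesis of Propositions \ref{prp:8} and \ref{prp:9} together with Lemma \ref{lmm:1}, so the proof amounts to checking that these pieces fit together correctly. First, I would invoke Proposition \ref{prp:8} to identify $\mathcal{D}_{0}(M,\tau,h)$ as an affine space modelled on the vector space $\overset{\tau}{\mathcal{F}}$. Recall that an affine space canonically becomes a vector space once a distinguished element is chosen as origin, with addition defined via the difference map into the modelling vector space.

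Second, I would use Proposition \ref{prp:9} to supply such an origin in the form of the torsion-free special connection $\overset{\tau}{\nabla}$, uniquely determined by the vanishing of its gravitational field strength (and admitting the explicit expression \eqref{eq:81}). Declaring $\overset{\tau}{\nabla}$ to be the zero thus endows $\mathcal{D}_{0}(M,\tau,h)$ with a vector space structure.

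Third, I would exhibit the isomorphism to $\overset{\tau}{\mathcal{F}}$ as the map $\overset{\tau}{\Theta}$ of Definition \ref{def:10}. Bijectivity of $\overset{\tau}{\Theta}$ is granted by Lemma \ref{lmm:1}, once we know, by \eqref{eq:78}, that it is an affine map modelled on the linear isomorphism $\overset{\tau}{\overline{\varphi}}\tensor{\vphantom{\overline{\varphi}}}{^{-1}}$ of Proposition \ref{prp:7}. Linearity with respect to the newly installed vector space structure then follows because $\overset{\tau}{\Theta}$ sends the chosen origin $\overset{\tau}{\nabla}$ to $0\in\overset{\tau}{\mathcal{F}}$ and its associated linear part is precisely $\overset{\tau}{\overline{\varphi}}\tensor{\vphantom{\overline{\varphi}}}{^{-1}}$; these two facts together force $\overset{\tau}{\Theta}$ to be a vector space isomorphism.

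I do not expect any genuine obstacle, since the analytic content — the characterisation of $\mathfrak{V}(M,\tau,h)$, the identification $\mathfrak{V}(M,\tau,h)\cong\overset{\tau}{\mathcal{F}}$, and the existence and explicit form of $\overset{\tau}{\nabla}$ — has been carried out in the preceding propositions. The only point that deserves a short verification is the adjective \emph{naturally}: I would remark that, once the Augustinian data $(M,\tau,h)$ and the longitudinal frame $\tau_{A}$ are fixed, every ingredient entering both the vector space structure on $\mathcal{D}_{0}(M,\tau,h)$ and the isomorphism $\overset{\tau}{\Theta}$ is canonically built from those data alone, with no auxiliary choice intervening, which justifies the terminology.
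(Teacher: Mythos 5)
Your proposal is correct and follows essentially the same route as the paper, which explicitly presents Theorem \ref{th:1} as a summary of the preceding results: Proposition \ref{prp:8} gives the affine structure modelled on $\overset{\tau}{\mathcal{F}}$, Proposition \ref{prp:9} (via Lemma \ref{lmm:1} applied to the map $\overset{\tau}{\Theta}$ of Definition \ref{def:10}, which is affine over $\overset{\tau}{\overline{\varphi}}\tensor{\vphantom{\overline{\varphi}}}{^{-1}}$ by \eqref{eq:78}) supplies the origin $\overset{\tau}{\nabla}=\text{Ker}\,\overset{\tau}{\Theta}$, and the vector space structure and isomorphism follow exactly as you describe.
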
 

In practice, this theorem shows that all connections in $\mathcal{D}_{0}(M,\tau,h)$ can be given explicitly as
\begin{equation}\label{eq:82}
{\Gamma^{\lambda}}_{\mu\nu}=\overset{\tau}{\Gamma}\tensor{\vphantom{\Gamma}}{^\lambda_\mu_\nu}+\tau^{A}_{(\mu}\overset{\tau}{F}_{\vert A\vert \nu)\rho}h^{\lambda\rho}.
\end{equation}
However, this depends on a choice of longitudinal frame $\tau_{A}$. In general, two pairs $(\tau_{A},[{F}_{A}]_{\tau})$ and $(\tau'_{A},[{F'}_{A}]_{\tau'})$ might map to the same connection via \eqref{eq:82}. Their relation is given in the following proposition.

 \begin{proposition}\label{prp:10}
 \textit{Consider two pairs $(\tau_{A},[{F}_{A}]_{\tau})$ and $(\tau'_{A},[{F'}_{A}]_{\tau'})$. $\tau'_{A}$ and $\tau_{A}$ (and the respective longitudinal co-frames) are related by (cf.\,Proposition \ref{prp:1}) }
\begin{align}\label{eq:83}
\tau'_{A}={\Lambda^{B}}_{A}\tau_{B}+V_{A},\,\,\,\,\,\,\tau'^{A}={(\Lambda^{-1})^{A}}_{B}\tau^{B}.
\end{align}
 \textit{Then, if $(\tau_{A},[{F}_{A}]_{\tau})$ and $(\tau'_{A},[{F'}_{A}]_{\tau'})$ map to the same connection, i.e.}
\begin{equation}\label{eq:84}
{\Gamma^{\lambda}}_{\mu\nu}=\overset{\tau}{\Gamma}\tensor{\vphantom{\Gamma}}{^\lambda_\mu_\nu}+\tau^{A}_{(\mu}\overset{\tau}{F}_{\vert A\vert \nu)\rho}h^{\lambda\rho}=\overset{\tau'}{\Gamma}\tensor{\vphantom{\Gamma}}{^\lambda_\mu_\nu}+\tau'^{A}_{(\mu}\overset{\tau'}{F'}_{\vert A\vert \nu)\rho}h^{\lambda\rho},
\end{equation}
\textit{one has}
\begin{align}\label{eq:85}
[F'_{A}]_{\tau'}=&{\Lambda^{B}}_{A}\left[\overset{\tau}{F_{B}}+\text{d}\overset{\tau}{h}\left({(\Lambda^{-1})^{C}}_{B}V_{C}\right)+\frac{1}{2}\tau^{D}\wedge\text{d}\left(\overset{\tau}{h}\left({(\Lambda^{-1})^{E}}_{D}V_{E},{(\Lambda^{-1})^{F}}_{B}V_{F}\right)\right)\right]_{\tau'}\\ \notag
&+{\Lambda^{B}}_{A}\left[{\Delta^{D}}_{B}\wedge\overset{\tau}{h}\left({(\Lambda^{-1})^{E}}_{D}V_{E}\right)\right]_{\tau'}\\ \notag
&+{\Lambda^{B}}_{A}\left[\left(\frac{1}{2}\eta^{DE}\eta_{CF}\text{d}\tau^{F}(\tau_{B},\tau_{E})\right)\overset{\tau}{h}\left({(\Lambda^{-1})^{G}}_{D}V_{G}\right)\wedge\tau^{C}-\overset{\tau}{h}\left({(\Lambda^{-1})^{E}}_{C}V_{E},{(\Lambda^{-1})^{F}}_{D}V_{F}\right){\Delta^{D}}_{B}\wedge\tau^{C}\right]_{\tau'}, \notag
\end{align}
\textit{where ${\Delta^{D}}_{B}$ is the 1-form}
\begin{equation}\label{eq:86}
{\Delta^{D}}_{B}=\frac{1}{2}\left(\eta^{DE}\eta_{BF}\text{d}\tau^{F}(\tau_{E})-\text{d}\tau^{D}(\tau_{B})\right).
\end{equation}
\end{proposition}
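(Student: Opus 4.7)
The plan is to compute directly, exploiting the fact that both parametrisations in \eqref{eq:84} must yield the \emph{same} affine connection. Rewriting \eqref{eq:84} as
\begin{equation*}
\tau'^{A}_{(\mu}\overset{\tau'}{F'}_{|A|\nu)\rho}h^{\lambda\rho} - \tau^{A}_{(\mu}\overset{\tau}{F}_{|A|\nu)\rho}h^{\lambda\rho} = \overset{\tau}{\Gamma}\tensor{\vphantom{\Gamma}}{^\lambda_\mu_\nu} - \overset{\tau'}{\Gamma}\tensor{\vphantom{\Gamma}}{^\lambda_\mu_\nu},
\end{equation*}
the idea is to evaluate the right-hand side in closed form and then apply the inverse isomorphism of \eqref{eq:76} to read off $[F'_A]_{\tau'}$ from the resulting tensor.

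First I would expand the right-hand side using the explicit formula \eqref{eq:81} for the torsion-free special connection. This produces two groups of terms: one proportional to $(\overset{\tau}{\tau}-\overset{\tau'}{\tau})\tensor{\vphantom{\tau}}{^\lambda^\gamma}$ contracted with $\partial\tau$, and one proportional to $h^{\lambda\gamma}\partial(\overset{\tau}{h}-\overset{\tau'}{h})$. Substituting the transformation laws from Proposition \ref{prp:2} for both differences yields an expression polynomial in $V_A$ of total degree two, which I would organise by its order in $V$. The order-zero piece encodes the covariant transport of $[\overset{\tau}{F}_A]_\tau$ under the co-frame change $\tau'^A={(\Lambda^{-1})^A}_B\tau^B$ and produces the leading ${\Lambda^B}_A[\overset{\tau}{F}_B]_{\tau'}$ term in \eqref{eq:85}. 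The order-one-in-$V$ pieces, after antisymmetrising in $(\mu,\nu)$ and contracting against $\tau'_B$ as prescribed by \eqref{eq:76}, assemble into $\text{d}\overset{\tau}{h}\!\left({(\Lambda^{-1})^C}_B V_C\right)$ together with the mixed ${\Delta^D}_B\wedge\overset{\tau}{h}(\cdot)$ contributions. The order-two-in-$V$ pieces give $\tfrac{1}{2}\tau^D\wedge\text{d}\bigl(\overset{\tau}{h}(V,V)\bigr)$ and the final $\overset{\tau}{h}(V,V)\,{\Delta^D}_B\wedge\tau^C$ term.

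The main technical obstacle is the emergence of the specific 1-form ${\Delta^D}_B$ of \eqref{eq:86}. The terms arising from $(\overset{\tau}{\tau}-\overset{\tau'}{\tau})\partial\tau$ produce derivatives of $\tau_{ab}$ that decompose naturally into longitudinal and transverse pieces of $\text{d}\tau^A$; the Augustinian conditions of Definition \ref{def:8} kill exactly the purely transverse part, leaving only the mixed longitudinal--transverse combination encoded in ${\Delta^D}_B$. Working throughout in an adapted Galilean $p$-frame simplifies the algebra considerably, and the identity $\text{d}(\eta_{AD}{\Lambda^A}_B{\Lambda^D}_C)=0$ --- already exploited in \eqref{eq:A67} --- absorbs the apparently inhomogeneous $\Lambda^{-1}\text{d}\Lambda$ pieces that would otherwise appear. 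Well-definedness modulo $\overset{\tau'}{\mathcal{K}}$ is automatic from Proposition \ref{prp:7}; as a consistency check, one verifies that shifting the chosen representative $\overset{\tau}{F}_A\mapsto\overset{\tau}{F}_A+\overset{\tau}{K}_A$ with $\overset{\tau}{K}_A$ satisfying \eqref{eq:73} alters the right-hand side of \eqref{eq:85} only by an element of $\overset{\tau'}{\mathcal{K}}$.
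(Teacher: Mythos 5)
Your proposal is correct in outline and rests on the same mechanism as the paper's proof: both reduce the problem to the mismatch between the two torsion-free special connections, transform it with Proposition \ref{prp:2}, and project to the quotient $\overset{\tau'}{\mathcal{F}}$. The difference lies in how the long computation is packaged. The paper never differences the explicit formula \eqref{eq:81}; it first writes $[F'_{A}]_{\tau'}=[{\Lambda^{B}}_{A}\overset{\tau}{F}_{B}+\overset{\tau'}{F}_{A}(\overset{\tau}{\nabla})]_{\tau'}$ and then evaluates the single polyform $\overset{\tau'}{F}_{Aab}(\overset{\tau}{\nabla})=2\overset{\tau'}{h}_{c[b}\overset{\tau}{\nabla}_{a]}\tau'^{c}_{A}$, which lets it exploit $\overset{\tau}{\nabla}\tau=\overset{\tau}{\nabla}h=0$ and $[\overset{\tau}{F}_{A}(\overset{\tau}{\nabla})]_{\tau}=0$ to discard whole blocks of terms before any coordinate expression appears. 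Your route --- differencing \eqref{eq:81} and pushing the result through \eqref{eq:76} --- necessarily lands on the same object, since $\overset{\tau'}{\overline{\varphi}}\tensor{\vphantom{\overline{\varphi}}}{^{-1}}(\overset{\tau}{\nabla}-\overset{\tau'}{\nabla})=[\overset{\tau'}{F}_{A}(\overset{\tau}{\nabla})]_{\tau'}$ by \eqref{eq:78}, and it is legitimate because Proposition \ref{prp:5} guarantees the difference lies in $\mathfrak{V}(M,\tau,h)$, but it costs you the manipulation of raw Christoffel symbols. Two points should be made precise when you carry it out. First, the leading ${\Lambda^{B}}_{A}[\overset{\tau}{F}_{B}]_{\tau'}$ term cannot come from the $V$-expansion of $\overset{\tau}{\Gamma}\tensor{\vphantom{\Gamma}}{^\lambda_\mu_\nu}-\overset{\tau'}{\Gamma}\tensor{\vphantom{\Gamma}}{^\lambda_\mu_\nu}$, which vanishes identically at order zero (a pure Lorentz rotation leaves $\overset{\tau}{\tau}\tensor{\vphantom{\tau}}{^a^b}$ and $\overset{\tau}{h}_{ab}$ invariant by Proposition \ref{prp:2}); it comes from re-reading $\tau^{A}_{(\mu}\overset{\tau}{F}_{\vert A\vert \nu)\rho}h^{\lambda\rho}$ on the left-hand side through the identity $\overset{\tau}{\overline{\varphi}}([F_{A}]_{\tau})=\overset{\tau'}{\overline{\varphi}}([{\Lambda^{B}}_{A}\overset{\tau}{F}_{B}]_{\tau'})$, which should be stated explicitly since it is the actual starting point of the paper's argument. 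Second, because $\Lambda$ and $V_{A}$ are position-dependent you cannot adapt a single Galilean $p$-frame to both structures simultaneously; the cancellation of the $\text{d}\Lambda$ terms via $\text{d}(\eta_{AD}{\Lambda^{A}}_{B}{\Lambda^{D}}_{C})=0$ that you invoke is indeed the correct mechanism, the same one used in \eqref{eq:A67}.
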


\begin{proof}
Noticing that 
\begin{equation}\label{eq:A10.6}
\overset{\tau}{\overline{\varphi}}\left(\left[F_{A}\right]_{\tau}\right)=\overset{\tau'}{\overline{\varphi}}\left(\left[{\Lambda^{B}}_{A}\overset{\tau}{F}_{B}\right]_{\tau'}\right),
\end{equation}
it is easy to show that
\begin{equation}\label{eq:A10.8}
[F'_{A}]_{\tau'}=\left[{\Lambda^{B}}_{A}\overset{\tau}{F}_{B}+\overset{\tau'}{F}_{A}(\overset{\tau}{\nabla})\right]_{\tau'},
\end{equation}
and we recall that $\overset{\tau'}{F}_{A}(\overset{\tau}{\nabla})$ is given by \eqref{eq:fieldstreght},
\begin{equation}\label{eq:A10.3}
\overset{\tau'}{F}_{Aab}(\overset{\tau}{\nabla})=2\overset{\tau'}{h}_{c[b}{\overset{\tau}{\nabla}}_{a]}\tau'^{c}_{A}.
\end{equation}

After a long computation using the transformation laws in \eqref{eq:53}, one gets that $\overset{\tau'}{F}_{A}(\overset{\tau}{\nabla})$ is
\begin{align}\label{eq:A10.9}
\overset{\tau'}{F}_{Aab}(\overset{\tau}{\nabla})=& {\Lambda^{B}}_{A}\left(\overset{\tau}{F}_{B}(\overset{\tau}{\nabla})_{ab}+4\delta^{C}_{[B}\delta^{F}_{D]}{(\Lambda^{-1})^{E}}_{C}\overset{\tau}{h}\left(V_{E}\right)_{c}\tau^{D}_{[b}\overset{\tau}{\nabla}_{a]}\tau^{c}_{F}\right)\\ \notag
&+{\Lambda^{B}}_{A}\left({(\Lambda^{-1})^{E}}_{[D}{(\Lambda^{-1})^{F}}_{B]}2\tau^{D}_{[a}\overset{\tau}{\nabla}_{b]}\overset{\tau}{h}\left(V_{E},V_{F}\right)\right)\\ \notag
&+{\Lambda^{B}}_{A}\left(\tau^{D}_{[a}\overset{\tau}{h}([{(\Lambda^{-1})^{E}}_{D}V_{E},{(\Lambda^{-1})^{F}}_{B}V_{F}])_{b]}\right)\\ \notag
&+{\Lambda^{B}}_{A}\left(\eta^{DE}\eta_{CF}\text{d}\tau^{F}(\tau_{B},\tau_{E})\overset{\tau}{h}\left({(\Lambda^{-1})^{G}}_{H}V_{G},{(\Lambda^{-1})^{J}}_{D}V_{J}\right)\tau^{C}_{[a}\tau^{H}_{b]}\right)\\ \notag
&+{\Lambda^{B}}_{A}\left(\left(\text{d}\overset{\tau}{h}({(\Lambda^{-1})^{C}}_{B}V_{C})\right)_{ab}+\left(\frac{1}{2}\tau^{D}\wedge\text{d}\overset{\tau}{h}\left({(\Lambda^{-1})^{E}}_{D}V_{E},{(\Lambda^{-1})^{F}}_{B}V_{F}\right)\right)_{ab}\right)\\ \notag
&+{\Lambda^{B}}_{A}\left({\Delta^{D}}_{B}\wedge\overset{\tau}{h}({(\Lambda^{-1})^{E}}_{D}V_{E})\right)_{ab}\\ \notag
&+{\Lambda^{B}}_{A}\left(\left(\frac{1}{2}\eta^{DE}\eta_{CF}\text{d}\tau^{F}(\tau_{B},\tau_{E})\right)\left(\overset{\tau}{h}\left({(\Lambda^{-1})^{G}}_{D}V_{G}\right)\wedge\tau^{C}\right)_{ab}\right)\\ \notag
&-{\Lambda^{B}}_{A}\left(\overset{\tau}{h}\left({(\Lambda^{-1})^{E}}_{C}V_{E},{(\Lambda^{-1})^{F}}_{D}V_{F}\right)\left({\Delta^{D}}_{B}\wedge\tau^{C}\right)_{ab}\right). \notag
\end{align}
However, taking into account \eqref{eq:A10.6}, it is not difficult to check that the first four lines in \eqref{eq:A10.9} belong to the kernel of $\overset{\tau'}{\varphi}$ (as they satisfy the equations \eqref{eq:73}) and, hence, they map to zero when taking the quotient $[\cdot]_{\tau'}$. Plugging \eqref{eq:A10.9} into \eqref{eq:A10.8}, the result follows.
\end{proof}

Although, for arbitrary $p$, equation \eqref{eq:85} is rather involved, for $p=0$ we have that $\text{d}\tau=0={\Delta^{D}}_{B}$, so it reduces manifestly to the transformation law of field strengths in standard Augustinian structures \cite{XBKM}. 

More formally, it is convenient to regard the pairs $(\tau_{A},[{F}_{A}]_{\tau})$ as living in the space $\bigsqcup_{\tau_{A}}\overset{\tau}{\mathcal{F}}$, the disjoint union of the family of sets $\{\overset{\tau}{\mathcal{F}}\}$ indexed by the longitudinal frames $\tau_{A}$ \footnote{Notice that, in the case $p=0$, the space $\overset{\tau}{\mathcal{F}}$ is nothing but $\Omega^{2}(M)$ and, hence, it does not depend on $\tau_{A}$. Thus, $\bigsqcup_{\tau_{A}}\overset{\tau}{\mathcal{F}}=\Gamma(Lf_{0}(M,\tau))\times\Omega^{2}(M)$, or, in the terminology of \cite{XBKM}, $FO(M,\tau)\times\Omega^{2}(M)$.}. Then, from Proposition \ref{prp:10}, it follows that the longitudinal group $LG_{p}$ acts on $\bigsqcup_{\tau_{A}}\overset{\tau}{\mathcal{F}}$ as
\begin{align}\label{eq:87}
\bigsqcup_{\tau_{A}}\overset{\tau}{\mathcal{F}}\times C^{\infty}(M,LG_{p}) &\longrightarrow \bigsqcup_{\tau_{A}}\overset{\tau}{\mathcal{F}},\\
\Big(\left(\tau_{A},[F_{A}]_{\tau}\right),(\Lambda,V)\Big) & \mapsto \left(\tau'_{A},[F'_{A}]_{\tau'}\right), \notag
\end{align}
where
\begin{equation}\label{eq:88}
\tau'_{A}={\Lambda^{B}}_{A}\tau_{B}+{V}_{A},
\end{equation}
and $[F'_{A}]_{\tau'}$ is given by \eqref{eq:85}. 

\begin{definition}\label{def:11}
\textit{Let $(M,\tau,h)$ be an Augustinian $p$NC structure. An orbit of $C^{\infty}(M,LG_{p})$ in $\bigsqcup_{\tau_{A}}\overset{\tau}{\mathcal{F}}$ is dubbed a gravitational field strength. The space of gravitational field strengths is denoted}
\begin{equation}\label{eq:89}
\mathfrak{F}(M,\tau,h):=\left(\bigsqcup_{\tau_{A}}\overset{\tau}{\mathcal{F}}\right)/C^{\infty}(M,LG_{p}).
\end{equation}
\end{definition}
We shall denote $\left[\tau_{A},[F_{A}]_{\tau}\right]$ the elements in $\mathfrak{F}(M,\tau,h)$. Notice that the space of field strengths $\mathfrak{F}(M,\tau,h)$ is canonical, as it only depends on the pieces of the Augustinian $p$NC structure. Furthermore, from Proposition \ref{prp:10}, it follows that the orbit of $C^{\infty}(M,LG_{p})$ through some $\left(\tau_{A},[F_{A}]_{\tau}\right)\in\bigsqcup_{\tau_{A}}\overset{\tau}{\mathcal{F}}$ is precisely the set of elements that map via \eqref{eq:82} to the same connection as $\left(\tau_{A},[F_{A}]_{\tau}\right)$. We conclude that the map
\begin{align}\label{eq:90}
 \mathfrak{F}(M,\tau,h)&\longrightarrow\mathcal{D}_{0}(M,\tau,h),\\
\left[\tau_{A},[F_{A}]_{\tau}\right] & \mapsto \overset{\tau}{\Gamma}\tensor{\vphantom{\Gamma}}{^\lambda_\mu_\nu}+\tau^{A}_{(\mu}\overset{\tau}{F}_{\vert A\vert \nu)\rho}h^{\lambda\rho}, \notag
\end{align}
where $(\tau_{A},\overset{\tau}F_{A})$ is any representative of $\left[\tau_{A},[F_{A}]_{\tau}\right]\in\mathfrak{F}(M,\tau,h)$, is a canonical bijection. Finally, we notice that, for $p=0$, our space of gravitational field strengths reduces to that of Newton--Cartan gravity \cite{XBKM}. The standard physical interpretation of the latter justifies the terminology used throughout this section. 
\section{Discussion}
Our results can be summarised as follows.

\begin{itemize}

\item The notion of $p$-brane Newton--Cartan structure has been recovered as a set of singular metrics that realise the group $G_{p}$ (the group of symmetries in tangent space of a background consisting of flat space containing a non-relativistic $p$-brane). A formal description of such structures has been provided for arbitrary $p$. Leibnizian and relativistic structures are recovered by setting $p=0$ and $p=d-1$, respectively.

\item Focusing on Augustinian $p$NC structures (the only $p$NC structures admitting torsion-free, compatible connections), a generalisation of the torsion-free, special connections of Newton--Cartan gravity \cite{Torfreespecial1,Torfreespecial2} has been obtained. This solves the equivalence problem in the Augustinian case. Finally, the corresponding space $\mathcal{D}_{0}(M,\tau,h)$ of torsion-free, compatible connections has been determined exactly, as summarised in Theorem \ref{th:1}.

\end{itemize}

It is interesting to contrast our results with those in the original formulation of SNC geometry \cite{SNC}. First, we notice that the set of conventional curvature constraints of \cite{SNC} imply, among other things, that the corresponding $p$NC structure is Augustinian. Indeed, equations (3.26) in \cite{SNC} (which correspond to projections of the curvature constraints) are nothing but \eqref{eq:63}. Then, imposing a Vielbein postulate a torsion-free connection on $TM$ compatible with the $p$NC structure is introduced. After some manipulation, the authors managed to put it in the form \eqref{eq:82}, in agreement with our results. The remaining conventional curvature constraints fix the field strengths $\overset{\tau}{F}_{A\mu\nu}$ as
\begin{equation}\label{eq:91}
\overset{\tau}{F}_{A\mu\nu}=2D_{[\mu}{m_{\nu]}}^{A},
\end{equation}
where ${m_{\mu}}^{A}$ is a gauge field and $D_{\mu}$ the gauge covariant derivative. Also, it was noticed that $\overset{\tau}{F}_{A\mu\nu}$ has certain ambiguity corresponding to the shift
\begin{equation}\label{eq:92}
\overset{\tau}{F}_{A\mu\nu}\mapsto \overset{\tau}{F}_{A\mu\nu} + Y_{ABC}\left(\tau^{B}\wedge\tau^{C}\right)_{\mu\nu},
\end{equation}
for arbitrary parameters $Y_{ABC}$. In addition, the authors were able to provide an interpretation of this ambiguity in terms of the gauge transformations of the theory. However, we have shown that this is not the only ambiguity in $\overset{\tau}{F}_{A\mu\nu}$. Recalling Proposition \ref{prp:6}, one has that the most general 2-forms $\overset{\tau}{K}_{A\mu\nu}$ by which $\overset{\tau}{F}_{A\mu\nu}$ can be shifted not only have longitudinal components but also mixed ones. While, as noticed in \cite{SNC}, the longitudinal components are arbitrary, the mixed ones must satisfy
\begin{equation}\label{eq:93}
\overset{\tau}{K}_{A}(\tau_{B},V)=-\overset{\tau}{K}_{B}(\tau_{A},V).
\end{equation}
It would be interesting to investigate in future work whether this further 'ambiguity in the ambiguity' can be explained by the gauge transformations and choice of conventional curvature constraints of \cite{SNC}.

Continuations of this work can be suggested from several perspectives. Motivated by classical work about Newton--Cartan structures (see \cite{NullRed} or more recently \cite{Morand}), it would be interesting to study if $p$NC geometry can result from a null dimensional reduction of a relativistic ambient space. Alternatively, Double Field Theory has proved being powerfull for classifying and describing non-Riemannian geometries \cite{DFT}, and it seems likely that $p$NC structures are included in such classification. Dually to $p$NC structures, a surge of interest for Carrollian geometry, see e.g.\,\cite{NewtonCarroll}, motivates the question of whether Carrollian structures and connections \cite{XBKM2,HartongCarroll} can be generalised to the case of $p$-branes with Carroll symmetry \cite{Biel}. Finally, it would also be interesting to provide an intrinsic definition of a subset in $\mathcal{D}_{0}(M,\tau,h)$ that corresponds precisely to the connections satisfying \eqref{eq:91}. This would be analogous to the case of Leibnizian structures, where the subset of Galilean connections satisfying the Duval-Künzle condition \cite{Torfreespecial1,DuvalKunzle} are precisely those whose field strength is closed.
\vspace{1.5cm}

\large\textsc{Acknowledgements}.
I would like to thank Tomas Ortín and Luca Romano for useful discussion and a careful reading of the manuscript. I am also grateful to Ángel Murcia, Carlos S. Shahbazi, Joaquim Gomis and Eric A. Bergshoeff for fruitful conversations. Finally, I specially thank Xavier Bekaert and Kevin Morand for helpful and attentive correspondence. This work has been supported in part by the MINECO/FEDER, UE grant PGC2018-095205-B-I00, and by the ``Centro de Excelencia Severo Ochoa'' Program grant  SEV-2016-0597. I am further supported by a ``Centro de Excelencia Internacional UAM/CSIC'' FPI pre-doctoral grant.



\begin{thebibliography}{99}
\bibitem{E.Cartan} 
E.Cartan, 'Sur les variétés à connexion affine et la théorie de la relativité généralisée (première partie),'Ann. Sc. École Norm. Sup. 40 (1923) 325 [in
French].
\bibitem{XBKM}
  X.~Bekaert and K.~Morand,
  J.\ Math.\ Phys.\  {\bf 57} (2016) no.2,  022507
  doi:10.1063/1.4937445
  [arXiv:1412.8212 [hep-th]].
\bibitem{Malament} 
D. B. Malament, \textit{Topics in the Foundations of General Relativity and
Newtonian Gravitation Theory} (University of Chicago, 2012)
\bibitem{NRCS}
  J.~Gomis and H.~Ooguri,
  J.\ Math.\ Phys.\  {\bf 42} (2001) 3127
  doi:10.1063/1.1372697
  [hep-th/0009181].
\bibitem{SNC}
  R.~Andringa, E.~Bergshoeff, J.~Gomis and M.~de Roo,
  Class.\ Quant.\ Grav.\  {\bf 29} (2012) 235020
  doi:10.1088/0264-9381/29/23/235020
  [arXiv:1206.5176 [hep-th]].
\bibitem{NRSTTD}
  E.~Bergshoeff, J.~Gomis and Z.~Yan,
  JHEP {\bf 1811} (2018) 133
  doi:10.1007/JHEP11(2018)133
  [arXiv:1806.06071 [hep-th]].
  \bibitem{Tachyons}
  D.~Roychowdhury,
  Phys.\ Lett.\ B {\bf 795} (2019) 225
  doi:10.1016/j.physletb.2019.06.031
  [arXiv:1903.05890 [hep-th]].
\bibitem{ObersHolography1}
  M.~H.~Christensen, J.~Hartong, N.~A.~Obers and B.~Rollier,
  Phys.\ Rev.\ D {\bf 89} (2014) 061901
  doi:10.1103/PhysRevD.89.061901
  [arXiv:1311.4794 [hep-th]].
  \bibitem{ObersHolography2}
  M.~H.~Christensen, J.~Hartong, N.~A.~Obers and B.~Rollier,
  JHEP {\bf 1401} (2014) 057
  doi:10.1007/JHEP01(2014)057
  [arXiv:1311.6471 [hep-th]].
  \bibitem{HL}
  J.~Hartong and N.~A.~Obers,
  JHEP {\bf 1507} (2015) 155
  doi:10.1007/JHEP07(2015)155
  [arXiv:1504.07461 [hep-th]].
\bibitem{ObersST1}
  T.~Harmark, J.~Hartong and N.~A.~Obers,
  Phys.\ Rev.\ D {\bf 96} (2017) no.8,  086019
  doi:10.1103/PhysRevD.96.086019
  [arXiv:1705.03535 [hep-th]].
  \bibitem{ObersST2}
  T.~Harmark, J.~Hartong, L.~Menculini, N.~A.~Obers and Z.~Yan,
  JHEP {\bf 1811} (2018) 190
  doi:10.1007/JHEP11(2018)190
  [arXiv:1810.05560 [hep-th]].
  \bibitem{ObersST3}
  T.~Harmark, J.~Hartong, L.~Menculini, N.~A.~Obers and G.~Oling,
  arXiv:1907.01663 [hep-th].
\bibitem{Bargmann}
  R.~Andringa, E.~Bergshoeff, S.~Panda and M.~de Roo,
  Class.\ Quant.\ Grav.\  {\bf 28} (2011) 105011
  doi:10.1088/0264-9381/28/10/105011
  [arXiv:1011.1145 [hep-th]].
  \bibitem{GalileiFieldTheory}
  R.~Banerjee and P.~Mukherjee,
  Phys.\ Rev.\ D {\bf 98} (2018) no.12,  124021
  doi:10.1103/PhysRevD.98.124021
  [arXiv:1810.03902 [gr-qc]].
  \bibitem{ObersAction1}
  D.~Hansen, J.~Hartong and N.~A.~Obers,
  Phys.\ Rev.\ Lett.\  {\bf 122} (2019) no.6,  061106
  doi:10.1103/PhysRevLett.122.061106
  [arXiv:1807.04765 [hep-th]].
  \bibitem{ObersAction2}
  D.~Hansen, J.~Hartong and N.~A.~Obers,
  arXiv:1904.05706 [gr-qc].
\bibitem{LucaTomas}
  E.~Bergshoeff, J.~M.~Izquierdo, T.~Ortín and L.~Romano,
  arXiv:1904.08304 [hep-th].
\bibitem{Hoyos}
  C.~Hoyos and D.~T.~Son,
  Phys.\ Rev.\ Lett.\  {\bf 108} (2012) 066805
  doi:10.1103/PhysRevLett.108.066805
  [arXiv:1109.2651 [cond-mat.mes-hall]].
\bibitem{Son}
  D.~T.~Son,
  arXiv:1306.0638 [cond-mat.mes-hall].
\bibitem{TomasDavidLuca1}
T.~Ortín, D.~Pereñiguez, L.~Romano, Work in progress. 
\bibitem{BernalSanchez}
  A.~N.~Bernal and M.~Sanchez,
  J.\ Math.\ Phys.\  {\bf 44} (2003) 1129
  doi:10.1063/1.1541120
  [gr-qc/0211030].
\bibitem{Torfreespecial1} H. P. Künzle, 'Galilei and Lorentz structures on space-time: comparison of the corresponding geometry and physics,' Ann. Inst. Henri Poincaré A \textbf{17}
(1972) 337.
\bibitem{Torfreespecial2} H. P. Künzle, 'Covariant Newtonian limit of Lorentz space-times,' Gen. Relat.
Gravit. {\bf7} (1976) 445.
\bibitem{NonRellim1}
  J.~Gomis, K.~Kamimura and P.~K.~Townsend,
  JHEP {\bf 0411} (2004) 051
  doi:10.1088/1126-6708/2004/11/051
  [hep-th/0409219].
\bibitem{TomasDavidLuca2}
T.~Ortín, D.~Pereñiguez, L.~Romano, Work in progress.
\bibitem{VdB}
  D.~Van den Bleeken,
  Class.\ Quant.\ Grav.\  {\bf 34} (2017) no.18,  185004
  doi:10.1088/1361-6382/aa83d4
  [arXiv:1703.03459 [gr-qc]].
\bibitem{ObersExpansion}
  D.~Hansen, J.~Hartong and N.~A.~Obers,
  arXiv:1905.13723 [gr-qc].
\bibitem{Schutz}
B.Schutz, \textit{'Geometrical methods of mathematical physics'} (Cambridge University Press, 1980) Chapter 4.
  \bibitem{NullRed}
  C.~Duval, G.~W.~Gibbons and P.~Horvathy,
  Phys.\ Rev.\ D {\bf 43} (1991) 3907
  doi:10.1103/PhysRevD.43.3907
  [hep-th/0512188].
 \bibitem{Morand}
  K.~Morand,
  arXiv:1811.12681 [hep-th].

 \bibitem{DFT}
  K.~Morand and J.~H.~Park,
  Eur.\ Phys.\ J.\ C {\bf 77} (2017) no.10,  685
   Erratum: [Eur.\ Phys.\ J.\ C {\bf 78} (2018) no.11,  901]
  doi:10.1140/epjc/s10052-017-5257-z, 10.1140/epjc/s10052-018-6394-8
  [arXiv:1707.03713 [hep-th]].

\bibitem{NewtonCarroll}
  C.~Duval, G.~W.~Gibbons, P.~A.~Horvathy and P.~M.~Zhang,
  Class.\ Quant.\ Grav.\  {\bf 31} (2014) 085016
  doi:10.1088/0264-9381/31/8/085016
  [arXiv:1402.0657 [gr-qc]].
 \bibitem{XBKM2}
  X.~Bekaert and K.~Morand,
  J.\ Math.\ Phys.\  {\bf 59} (2018) no.7,  072503
  doi:10.1063/1.5030328
  [arXiv:1505.03739 [hep-th]].
  \bibitem{HartongCarroll}
  J.~Hartong,
  JHEP {\bf 1508} (2015) 069
  doi:10.1007/JHEP08(2015)069
  [arXiv:1505.05011 [hep-th]].
 \bibitem{Biel}
  B.~Cardona, J.~Gomis and J.~M.~Pons,
  JHEP {\bf 1607} (2016) 050
  doi:10.1007/JHEP07(2016)050
  [arXiv:1605.05483 [hep-th]].

\bibitem{DuvalKunzle} C. Duval, H. P. Künzle, 'Sur les connexions newtoniennes et l'extension non triviale du groupe de Galilée,' C. R. Acad. Sc. Paris {\bf 285} (1977) 813 [in
French].


\end{thebibliography}
\end{document}